\newcommand\wt{\widetilde}
\newcommand\wh{\widehat}
\newcommand{\CC}{{\mathbb C}}
\newcommand{\PP}{{\mathbb P}}
\newcommand{\RR}{{\mathbb R}}
\newcommand{\TT}{{\mathbb T}}
\newcommand{\ZZ}{{\mathbb Z}}
\newcommand{\NN}{{\mathbb N}}
\newcommand{\HH}{{\mathbb H}}
\newcommand{\vect}[1]{\boldsymbol{#1}}
\newtheorem{theorem}{Theorem}
\newtheorem{proposition}{Proposition}
\newtheorem{lemma}{Lemma}
\newtheorem{corollary}{Corollary}
\theoremstyle{remark}
\newtheorem{remark}{Remark}
\newcommand\pgfmathsinandcos[3]{%
  \pgfmathsetmacro#1{sin(#3)}%
  \pgfmathsetmacro#2{cos(#3)}%
}
\newcommand\LatitudePlane[3][current plane]{%
  \pgfmathsinandcos\sinEl\cosEl{#2} 
  \pgfmathsinandcos\sint\cost{#3} 
  \pgfmathsetmacro\yshift{\cosEl*\sint}
  \tikzset{#1/.style={cm={\cost,0,0,\cost*\sinEl,(0,\yshift)}}} %
}
\newcommand\NewLatitudePlane[4][current plane]{%
  \pgfmathsinandcos\sinEl\cosEl{#3} 
  \pgfmathsinandcos\sint\cost{#4} 
  \pgfmathsetmacro\yshift{#2*\cosEl*\sint}
  \tikzset{#1/.style={cm={\cost,0,0,\cost*\sinEl,(0,\yshift)}}} %
}
\newcommand\DrawLatitudeCircle[2][1]{
  \LatitudePlane{\angEl}{#2}
  \tikzset{current plane/.prefix style={scale=#1}}
  \pgfmathsetmacro\sinVis{sin(#2)/cos(#2)*sin(\angEl)/cos(\angEl)}
  \pgfmathsetmacro\angVis{asin(min(1,max(\sinVis,-1)))}
  \draw[current plane] (\angVis:1) arc (\angVis:-\angVis-180:1);
  \draw[current plane,dashed] (180-\angVis:1) arc (180-\angVis:\angVis:1);
}
\tikzset{%
  >=latex, 
  inner sep=0pt,%
  outer sep=2pt,%
  mark coordinate/.style={inner sep=0pt,outer sep=0pt,minimum size=3pt,
    fill=black,circle}%
}
\begin{document}

\title{The Fermi gerbe of Weyl semimetals}

\author[1]{Alan Carey}
\author[2]{Guo Chuan Thiang\thanks{thgchuan@gmail.com, ORCID 0000-0003-0268-0065}}
\affil[1]{Mathematical Sciences Institute, Australian National University, Canberra}
\affil[1]{School of  Mathematics and Applied Statistics, University of Wollongong, Wollongong, NSW}
\affil[2]{Beijing International Center for Mathematical Research, Peking University, Beijing, China}
\renewcommand\Authands{ and }

\date{\today}

\maketitle

\begin{abstract}
In the gap topology, the unbounded self-adjoint Fredholm operators on a Hilbert space have third homotopy group the integers. We realise the generator explicitly, using a family of Dirac operators on the half-line, which arises naturally in Weyl semimetals in solid-state physics. A ``Fermi gerbe'' geometrically encodes how discrete spectral data of the family interpolate between essential spectral gaps. Its non-vanishing Dixmier--Douady invariant protects the integrity of the interpolation, thereby providing topological protection of the Weyl semimetal's Fermi surface.

\end{abstract}

\section*{Introduction}
Bundle gerbes provide a useful geometric model for the third cohomology of a manifold, see \cite{Murray,MS,Hitchin} for an overview. They have been well-utilised in quantum field theory and string mathematics, see \cite{CMM} and \cite{CJM}, and have also begun to enter condensed matter physics in the setting of topological insulators \cite{Gawedzki, GT-gerbe}, as well as Floquet topological systems \cite{GomiTauber} and geometric phases \cite{Viennot}.

We will introduce the \emph{Fermi gerbe}, and apply it in an essential way in the hybrid context of \emph{Weyl semimetals} \cite{AMV, MT}, which are condensed matter realisations of the chiral anomaly and Weyl fermions. In 3D, the key experimental signature of a Weyl semimetal is found at the material boundary, where robust \emph{Fermi arcs} of boundary-localised ``edge states'' appear. Physically, the Fermi gerbe is a geometrical encoding of how conducting edge states interpolate across an insulating spectral gap of bulk states, and has broad applicability beyond Weyl semimetals. Its topological invariant --- the Dixmier--Douady (DD) class --- is precisely an obstruction to destroying the Fermi arc (or Fermi surface, in the higher-dimensional case) without closing the bulk insulating gap. As a minimal example, we prove that the 5D Weyl semimetal has a non-trivial Fermi gerbe.

A few technical points are in order before sketching how this works. In the operator norm topology, the space $\mathcal{F}^{\rm sa}_*$ of \emph{bounded} self-adjoint Fredholm operators which are neither essentially positive nor essentially negative, was shown by Atiyah--Singer \cite{AS} to be a classifying space for the odd $K$-theory functor. Furthermore, a non-contractible loop in $\pi_1(\mathcal{F}^{\rm sa}_*)=K^{-1}(S^1)\cong\ZZ$ is precisely one which exhibits spectral flow \cite{Phillips}. If one has a loop of \emph{unbounded} self-adjoint Fredholms, which is continuous in the so-called Riesz topology \cite{BLP}, the spectral flow is well-defined via the bounded transform. Likewise, the bundle gerbe construction for Riesz-continuous families of Dirac type operators can be found in \cite{CMFaddeev,CMUniversal}.

As it turns out, a more useful topology on the unbounded self-adjoint operators $\mathcal{C}^{\rm sa}$ is the weaker \emph{gap topology}, which is the topology of norm convergence of the resolvents $(A+i)^{-1}$ \cite{BLP, RS1}. Working with the gap topology allows us to circumvent restrictive requirements (e.g.\ compact resolvents and/or fixed domains) which are not satisfied in the physically relevant examples. In the gap topology, the unbounded self-adjoint Fredholm operators $\mathcal{CF}^{\rm sa}$, \emph{without restriction on the essential spectrum}, also provide a classifying space for $K^{-1}$ \cite{Joachim}, thus $\pi_n(\mathcal{CF}^{\rm sa})\cong\ZZ$ for all odd $n$. For instance, spectral flow of loops in $\mathcal{F}^{\rm sa}_*$ generalises in basically the same way to loops in $\mathcal{CF}^{\rm sa}$. We are specifically interested in $\pi_3(\mathcal{CF}^{\rm sa})\cong\ZZ$, and will prove that the natural 3-sphere of \emph{quaternionic} half-line Dirac operators generates $\pi_3(\mathcal{CF}^{\rm sa})$.

The above operator family is even naturally derived from the Weyl Hamiltonian modelling a Weyl semimetal. 
In 3D, the key observation of \cite{GCT} is that the Weyl Hamiltonian on a half-space is assembled from non-contractible loops (in $\mathcal{CF}^{\rm sa}$) of Dirac operators on the Euclidean half-line $\RR_+$. These loops have homotopy invariant spectral flows, resulting in the experimentally robust Fermi arcs. Similarly, the 5D Weyl Hamiltonian is assembled from 3-spheres of quaternionic half-line Dirac operators. The latter has Fermi gerbe with non-trivial DD-invariant, leading to ``higher-dimensional spectral flow'' prefiguring the topologically protected Fermi surface (see Fig.\ \ref{fig:Fermi.sphere}).

On the issue of experimental accessibility of the Fermi gerbe concept, let us mention that ``synthetic dimensions'' can be introduced in optical systems, to simulate Hamiltonians which formally describe systems in dimensions greater than 3 \cite{Ozawa}. In a follow-up work \cite{GT-gerbe}, it is also shown that a `Real' version of the Fermi gerbe encodes the Dirac cone edge spectrum of 2D and 3D topological insulators with time-reversal symmetry. Finally, we mention recent theoretical \cite{Palumbo} and experimental work \cite{Tan}, which study fictitious ``tensor monopoles'' in 4D with associated DD-invariant.

\medskip

{\bf Notation}: For a self-adjoint $A$, we write $\sigma_{\rm d}(A), \sigma_{\rm ess}(A)$ and $\sigma(A)$ for its discrete spectrum, essential spectrum, and (full) spectrum, respectively.

\section{Warm-up: Spectral flow of half-line Dirac operators and 3D Weyl semimetals}\label{sec:warm.up}
The Dirac operator on the Euclidean half-line $\RR_+$ (i.e.\ $z\geq 0$), with mass $\rho\geq 0$, is the operator
\begin{equation}
{D}(\rho)=\begin{pmatrix}-i\frac{d}{dz} & \rho \\ \rho & i\frac{d}{dz}\end{pmatrix},\label{eqn:Dirac.massive}
\end{equation}
which is formally self-adjoint on the smooth sections with compact support away from the boundary, $C_0^\infty(\RR_+; \CC\oplus\CC)\subset L^2(\RR_+; \CC\oplus\CC)$. For each $\omega\in{\rm U}(1)$, there is a self-adjoint extension ${D}(\rho;\omega)$, defined by the boundary condition 
\begin{equation*}
\psi(z=0)\propto\begin{pmatrix}1 \\ \omega\end{pmatrix}.
\end{equation*}
In \cite{GCT}, it was shown that for any $\rho>0$, the operator loop $\{{D}(\rho;\omega)\}_{\omega\in{\rm U}(1)}$ is gap-continuous in $\mathcal{CF}^{\rm sa}$, with common essential spectrum
\begin{equation*}
\sigma_{\rm ess}({D}(\rho;\omega))=(-\infty,-\rho]\cup[\rho,\infty),\qquad \forall\;\omega\in{\rm U}(1).
\end{equation*}
In the essential spectral gap $(-\rho,\rho)$, the discrete spectrum is
\begin{equation*}
\sigma_{\rm d}({D}(\rho;\omega))=\begin{cases}\rho\cdot{\rm Re}(\omega),\qquad\;\; {\rm Im}(\omega)<0,\\ \emptyset,\qquad\qquad\qquad{\rm Im}(\omega)\geq 0.\end{cases}
\end{equation*}

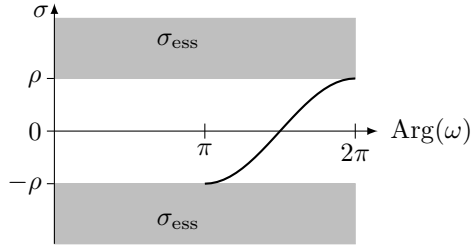
\begin{figure}[h]
\begin{center}
\begin{tikzpicture}
\draw[->] (-0.1,0) -- (4.3,0);
\node [right] at (4.4,0) {${\rm Arg}(\omega)$};
\node [left] at (0,1.6) {$\sigma$};
\node [left] at (-0.10,0.7) {$\rho$};
\node [left] at (-0.10,-0.7) {$-\rho$};
\node [left] at (-0.10,0) {$0$};
\node [below] at (2,-0.1) {$\pi$};
\node [below] at (4,-0.1) {$2\pi$};
\draw[-] (-0.1,0.7) -- (0.1,0.7);
\draw[-] (-0.1,-0.7) -- (0.1,-0.7);
\draw[-] (2,-0.1) -- (2,0.1);
\draw[-] (4,-0.1) -- (4,0.1);
\filldraw[lightgray] (0,0.7) rectangle (4,1.5);
\filldraw[lightgray] (0,-1.5) rectangle (4,-0.7);
\node [left] at (2,1.2) {$\sigma_{\rm ess}$};
\node [left] at (2,-1.2) {$\sigma_{\rm ess}$};
\draw[thick,domain=2:4] plot (\x,{0.7*cos(0.5*\x*pi r)});
\draw[->] (0,-1.5) -- (0,1.7);
\end{tikzpicture}
\end{center}
\caption{Spectrum of half-line Dirac operator $D(\rho,\omega)$ with mass $\rho>0$, as a function of the boundary condition $\omega\in{\rm U}(1)$. The essential spectrum is the grey shaded region, while the black curve joins up the eigenvalues of $D(\rho,\omega)$. Overall, the family $\{D(\rho,\omega)\}_{\omega\in{\rm U}(1)}$ has spectral flow 1.}\label{fig:basic.spectral.flow}
\end{figure}

\begin{proposition}[\cite{GCT}]
For any $\rho>0$, the loop $\{{D}(\rho;\omega)\}_{\omega\in{\rm U}(1)}$ of massive half-line Dirac operators represents a generator of $\pi_1(\mathcal{CF}^{\rm sa})\cong\ZZ$.
\end{proposition}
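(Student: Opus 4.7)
The plan is to exploit the fact that the isomorphism $\pi_1(\mathcal{CF}^{\rm sa})\cong\ZZ$ cited from \cite{Joachim} is realised by the spectral flow homomorphism, extending the bounded Atiyah--Singer picture of \cite{Phillips} to the gap-continuous unbounded setting via the bounded transform (cf.\ \cite{BLP}). It therefore suffices to compute that the given loop has spectral flow $\pm 1$. Gap-continuity of the family $\{D(\rho;\omega)\}_{\omega\in{\rm U}(1)}$, already established in \cite{GCT}, is what makes this spectral flow well-defined in the first place.

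Next I would observe the key simplification: the essential spectrum $(-\infty,-\rho]\cup[\rho,\infty)$ is independent of $\omega$, so the essential spectral gap $(-\rho,\rho)$ is constant along the loop. In this situation the spectral flow reduces to counting, with signs, the transverse crossings of the reference level $0$ by the discrete eigenvalue curves of $D(\rho;\omega)$ within this fixed gap. No further machinery is needed beyond reading off the spectral data given just above.

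Parameterising $\omega=e^{i\theta}$ with $\theta\in[0,2\pi]$, the stated formula for $\sigma_{\rm d}(D(\rho;\omega))$ says: for $\theta\in[0,\pi]$ there is no eigenvalue in the gap, while for $\theta\in(\pi,2\pi)$ there is exactly one, at $\lambda(\theta)=\rho\cos\theta$. This curve increases monotonically from $-\rho$ as $\theta\to\pi^+$ to $+\rho$ as $\theta\to 2\pi^-$, crossing $0$ transversally at precisely one point $\theta=3\pi/2$ with positive slope $\rho$. This yields spectral flow $+1$, so the loop is a generator.

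The only real obstacle is a bookkeeping one: the eigenvalue branch enters and exits the gap by touching the essential spectrum at $\lambda=\pm\rho$ rather than persisting globally. Since these edge events occur away from the reference level $0$, they contribute nothing to the spectral flow count; had an eigenvalue trajectory simultaneously crossed $0$ and touched $\sigma_{\rm ess}$, a more careful analysis in the gap topology would be required, but this does not happen here. Everything else is a direct application of the definition.
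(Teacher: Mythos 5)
Your proof matches the paper's: both invoke \cite{BLP} to identify $\pi_1(\mathcal{CF}^{\rm sa})\cong\ZZ$ with the spectral flow across $0$, and both read off spectral flow $+1$ from the single transverse zero-crossing of the eigenvalue branch $\rho\cos\theta$ inside the fixed essential gap. One minor correction to your framing: in the gap topology used here, spectral flow is \emph{not} defined via the bounded transform (that device requires the finer Riesz topology, where the bounded transform is norm-continuous) but directly through spectral projections as in \cite{BLP}; this does not affect the computation, and your observation that the eigenvalue branch enters and leaves the gap away from the reference level is exactly the right bookkeeping point.
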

\begin{proof}
The isomorphism $\pi_1(\mathcal{CF}^{\rm sa})\cong\ZZ$ is given by the spectral flow (across 0) \cite{BLP}, and for $\{{D}(\rho;\omega)\}_{\omega\in{\rm U}(1)}$, the spectral flow is $+1$ (see Fig.\ \ref{fig:basic.spectral.flow}).
\end{proof}

{\bf Application to 3D Weyl semimetals.}
In the context of Weyl semimetals occupying the upper-half 3D Euclidean space subject to some boundary condition $\omega_0\in{\rm U}(1)$, the $(\rho,\omega)$ occur as polar coordinates for the 2D momentum space $\wh{\RR}^2$ parallel to the boundary surface $z=0$. Fourier transforming the half-space Weyl Hamiltonian along the boundary surface directions converts it into $\int^\oplus_{(\rho,\omega)\in\wh{\RR}^2}{D}(\rho;\overline{\omega}\omega_0)$, and ${D}(\rho;\overline{\omega}\omega_0)$ has a zero eigenvalue exactly when $(\rho,\omega)$ lies on the ray ${\rm Arg}(\omega\overline{\omega_0})=\frac{\pi}{2}$. This ray connecting the origin of $\wh{\RR}^2$ to infinity is the \emph{Fermi arc}, and is ``topologically protected'' due to homotopy invariance of the spectral flows of $\{{D}(\rho;\overline{\omega})\}_{(\rho,\omega)\in\wh{\RR}^2{\setminus}\{0\}}$. Further details can be found in \cite{GCT}.

\section{Spectrum of quaternionic Dirac operators on the half-line}\label{sec:quaternionic.Dirac}
The \emph{quaternionic} Dirac operator on the half-line, with mass $\rho=1$, is defined to be the operator
\begin{equation}
\slashed{D}=\underbrace{\begin{pmatrix}-i\frac{d}{dz} & 0 & 1 & 0 \\ 0 & -i\frac{d}{dz} & 0 & 1 \\ 1 & 0 & i\frac{d}{dz} & 0 \\ 0 & 1 & 0 & i\frac{d}{dz}\end{pmatrix}}_{\rm complex\;notation}\equiv\underbrace{\begin{pmatrix}-i\frac{d}{dz}  & 1 \\ 1 & i\frac{d}{dz}\end{pmatrix}}_{\parbox{5em}{quaternionic\\ notation}},\label{eqn:Dirac.quaternion}
\end{equation}
formally self-adjoint on $C_0^\infty(\RR_+;(\CC^2\oplus\CC^2))\cong C_0^\infty(\RR_+;(\HH\oplus\HH))$. When switching to quaternionic notation, we identify each $\CC^2$ as a quaternionic vector space $\HH$, using the quaternionic structure $\Theta=\begin{pmatrix} 0 & -1 \\ 1 & 0\end{pmatrix}\circ\kappa$, where $\kappa$ denotes complex conjugation. Operators on $\CC^4$ commuting with $\Theta$ are then $2\times 2$ matrices with quaternion entries, and each quaternion entry is itself represented as a $2\times 2$ complex matrix. More details of conventions for quaternions can be found in Appendix \ref{sec:appendix.quaternion}.

The unbounded part ${\rm diag}(-i\frac{d}{dz}, -i\frac{d}{dz}, i\frac{d}{dz}, i\frac{d}{dz})$ of $\slashed{D}$ has two-dimensional $\pm i$ eigenspaces (the \emph{deficiency subspaces}), spanned over $\CC$ by $\mathsf{e_1}e^{-z}, \mathsf{e_2} e^{-z}$ and $\mathsf{e_3}e^{-z}, \mathsf{e_4} e^{-z}$, respectively, where $\mathsf{e_j}$ denote the standard basis vectors of $\CC^4$. Self-adjoint extensions of $\slashed{D}$ are fully parametrised by a unitary mapping between the deficiency subspaces \cite{RS2}. With respect to the above choice of basis, we may label such a map by elements of ${\rm U}(2)$.

Let us restrict to ${\rm Sp}(1)\cong{\rm SU}(2)\subset{\rm U}(2)$, in order to be compatible with $\CC^4\cong\HH^2$ as a quaternionic vector space. In terms of boundary conditions, the self-adjoint extension $\slashed{D}(q)$ of $\slashed{D}$ specified by $q\in{\rm Sp}(1)$ has domain
\begin{equation}
{\rm Dom}(\slashed{D}(q))=\left\{\psi\in H^1(\RR_+;\CC^4)\;:\; \psi(0)=\begin{pmatrix}u \\ qu\end{pmatrix},\;\;u\in\CC^2\right\},\label{eqn:Dirac.quaternion.bc}
\end{equation}
where $H^1\subset L^2$ denotes the Sobolev space with $L^2$ weak first derivatives. 

\begin{lemma}\label{lem:continuity.of.family}
For $q\in{\rm Sp}(1)\cong S^3$, let $\slashed{D}(q)$ be the self-adjoint extension of Eq.\ \eqref{eqn:Dirac.quaternion} determined by the boundary condition Eq.\ \eqref{eqn:Dirac.quaternion.bc}. Then the operator family $\{\slashed{D}(q)\}_{q\in{\rm Sp}(1)}$ is gap-continuous. 
\end{lemma}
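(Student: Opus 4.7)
My plan is to verify gap-continuity of $\{\slashed{D}(q)\}_{q \in \mathrm{Sp}(1)}$ directly from the definition: I will show that $q \mapsto (\slashed{D}(q)+i)^{-1}$ is norm-continuous as a map into $\mathcal{B}(L^2(\RR_+;\CC^4))$. Since the domain of $\slashed{D}(q)$ varies with the boundary condition, the stronger Riesz topology is not the right tool here; the gap topology is precisely the topology of resolvent norm continuity.

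The first step is to set up the resolvent equation as a first-order linear ODE boundary value problem. Given $f \in L^2(\RR_+;\CC^4)$, the equation $(\slashed{D}+i)\psi = f$ for $\psi \in H^1$ together with the graph condition $\psi(0) = (u, qu)^\top$ is a constant-coefficient system. A short computation shows that the $L^2$ solutions of the homogeneous problem $(\slashed{D}+i)\phi = 0$ form a two-complex-dimensional subspace $\mathcal{N}_- \subset L^2(\RR_+;\CC^4)$, spanned by explicit exponentials of the form $e^{-\sqrt{2} z}$ times fixed vectors in $\CC^4$. Evaluating these solutions at $z=0$ yields a fixed two-dimensional subspace $\mathcal{N}_-(0) \subset \CC^4$.

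The second step is to write the general $L^2$ solution by variation of parameters: $\psi = \psi_{\min}(f) + \phi_q$, where $\psi_{\min}(f)$ is a $q$-independent ``particular solution'' (the integral against the Green's kernel associated to the minimal extension, well-defined modulo $\mathcal{N}_-$), and $\phi_q \in \mathcal{N}_-$ is chosen so that $\psi(0)$ takes the graph form $(u,qu)^\top$. Matching the four components of $\psi(0)$ against the graph condition yields a $4\times 4$ linear system whose coefficient matrix depends smoothly on $q \in \mathrm{Sp}(1)$. Assuming this system is invertible for every $q$, Cramer's rule makes the coefficients of $\phi_q$ in a fixed basis of $\mathcal{N}_-$ smooth functions of $q$.

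The third step is to combine the two ingredients: since $\psi_{\min}(f)$ is $q$-independent and the finite-rank correction $\phi_q$ depends on $q$ through smoothly varying scalar coefficients against a fixed basis, the bounded operator $(\slashed{D}(q)+i)^{-1}$ depends norm-continuously on $q$. The main obstacle is verifying the uniform invertibility in Step 2, namely that the boundary-matching matrix stays invertible with a uniform lower bound on its determinant as $q$ varies. This reduces to the finite-dimensional statement that $\mathcal{N}_-(0) \subset \CC^4$ is transverse to every graph subspace $\{(u,qu)^\top : u \in \CC^2\}$ as $q$ ranges over $\mathrm{Sp}(1)$; transversality at each fixed $q$ is equivalent to self-adjointness of $\slashed{D}(q)$ (which holds by hypothesis), and the uniform lower bound is then automatic by continuity of the determinant and compactness of $\mathrm{Sp}(1)$.
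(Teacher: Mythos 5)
Your approach is correct in outline, but it is a genuinely different and considerably more laborious route than the paper's. The paper conjugates the family $\{\slashed{D}(q)\}$ by the norm-continuous unitaries $q\mapsto\bigl(\begin{smallmatrix}1&0\\0&\overline{q}\end{smallmatrix}\bigr)$; this transforms the $q$-dependent boundary condition $\psi(0)=(u,qu)^\top$ into the fixed condition $\psi(0)=(u,u)^\top$, at the cost of moving the $q$-dependence into the bounded off-diagonal mass term. Since the transformed operators then share a single domain and differ only by bounded, norm-continuous perturbations, gap-continuity is immediate, and unitary conjugation by a norm-continuous family preserves gap-continuity. Your proposal instead builds the resolvent $(\slashed{D}(q)+i)^{-1}$ by hand: a $q$-independent particular solution plus a finite-rank ($\mathcal{N}_-$-valued) correction whose coefficients are determined by a $4\times4$ boundary-matching system. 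That works --- your identification of $\mathcal{N}_-$ as spanned by $e^{-\sqrt{2}z}$ solutions is correct, your reduction of invertibility of the matching matrix to transversality of $\mathcal{N}_-(0)$ with the graph $\{(u,qu)^\top\}$ (equivalently, injectivity of $\slashed{D}(q)+i$, guaranteed by self-adjointness) is the right observation, and compactness of ${\rm Sp}(1)$ does give the uniform lower bound needed for norm-continuity. The trade-off is that your route requires you to actually carry out the Green's function construction, invoke the trace map $H^1(\RR_+)\to\CC^4$ to make $f\mapsto\psi_{\min}(f)(0)$ bounded, and verify transversality, whereas the paper's conjugation trick sidesteps all of this. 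On the other hand, your method is more elementary and would generalise straightforwardly to situations where no such simplifying unitary exists.
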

\begin{proof}
Conjugation by the norm-continuous unitary family, $q\mapsto \begin{pmatrix} 1 & 0 \\ 0 & \overline{q}\end{pmatrix}$
converts $\{\slashed{D}(q)\}_{q\in{\rm Sp}(1)}$ into the family 
\begin{equation}
{\rm Sp}(1)\ni q\mapsto \begin{pmatrix}-i\frac{d}{dz} & q \\ \overline{q} & i\frac{d}{dz}\end{pmatrix},\qquad \psi(0)=\begin{pmatrix}u \\ u\end{pmatrix},\;\;u\in\CC^2.\label{eqn:conjugated.family}
\end{equation}
All operators in the transformed family have the same domain and depend on $q$ only through the bounded off-diagonal mass term. Thus, the transformed family is gap-continuous, as is the original family in question.
\end{proof}
\begin{lemma}\label{lem:essential.spec}
For each $q\in{\rm Sp}(1)$, the self-adjoint operator $\slashed{D}(q)$ is Fredholm with essential spectrum
\begin{equation*}
\sigma_{\rm ess}(\slashed{D}(q))=(-\infty,-1]\cup[1,\infty),\qquad \forall q\in{\rm Sp}(1).
\end{equation*}
\end{lemma}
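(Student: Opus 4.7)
The plan is to separate the Fredholm claim from the spectral computation: since $\slashed{D}(q)$ is self-adjoint, it is Fredholm if and only if $0\notin\sigma_{\rm ess}(\slashed{D}(q))$, and $0$ lies in the putative gap $(-1,1)$. So it suffices to establish the identity $\sigma_{\rm ess}(\slashed{D}(q))=(-\infty,-1]\cup[1,\infty)$, which I would prove by two separate inclusions.

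For ``$\supseteq$'' I would produce explicit Weyl sequences supported far from the boundary, so that the boundary parameter $q$ plays no role. For fixed $\lambda\in\RR$ with $|\lambda|>1$, set $k=\sqrt{\lambda^2-1}$; the constant-coefficient Fourier symbol $\begin{pmatrix}k & 1\\ 1 & -k\end{pmatrix}$ (acting on the relevant $\CC^2$-factors) admits an eigenvector yielding $v\in\CC^4$ with $(\slashed{D}-\lambda)e^{ikz}v=0$ on all of $\RR$. Multiplying by smooth cut-offs $\chi_n$ supported in $[n,2n]$ with $\|\chi_n\|_{L^2}$ of order $\sqrt{n}$ and $\|\chi_n'\|_{L^2}$ of order $1/\sqrt{n}$, the renormalised functions $\psi_n=n^{-1/2}\chi_n(z)e^{ikz}v$ vanish near $z=0$ (hence automatically lie in ${\rm Dom}(\slashed{D}(q))$ for any $q$), tend weakly to $0$, and satisfy $(\slashed{D}(q)-\lambda)\psi_n\to 0$ in $L^2$, since the only surviving contribution is proportional to $\chi_n'$. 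This gives $\lambda\in\sigma_{\rm ess}(\slashed{D}(q))$ for $|\lambda|>1$, and closedness of the essential spectrum captures the endpoints $\lambda=\pm 1$.

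For ``$\subseteq$'' I would use the square. A direct computation on smooth compactly supported functions shows $\slashed{D}^2=(-d^2/dz^2+1)\cdot\mathbf{1}_{\CC^4}$, so $\slashed{D}(q)^2$ is a self-adjoint extension of this constant-coefficient scalar operator acting diagonally. Since the deficiency indices of $-d^2/dz^2$ on $C^\infty_0(\RR_+)$ are finite, any two self-adjoint extensions of $(-d^2/dz^2+1)\cdot\mathbf{1}_{\CC^4}$ differ by a finite-rank resolvent perturbation and therefore share an essential spectrum, which one computes to be $[1,\infty)$ for any convenient (e.g.\ Dirichlet) boundary condition via a Fourier-sine expansion. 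The spectral mapping theorem $\sigma_{\rm ess}(f(A))=f(\sigma_{\rm ess}(A))$ for self-adjoint $A$ and continuous $f(x)=x^2$ then forces $\sigma_{\rm ess}(\slashed{D}(q))\subseteq\{x\in\RR:x^2\geq 1\}$, completing the identification.

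The technical step I expect to be the most delicate is the bookkeeping around $\slashed{D}(q)^2$: one must verify that its domain imposes a well-defined $q$-dependent double boundary condition at $z=0$, so that the squared operator genuinely falls within the family of self-adjoint extensions of the scalar Laplacian-plus-constant on $\RR_+$. Should this prove awkward, I would fall back on a direct argument that $\sigma_{\rm ess}(\slashed{D}(q))$ is $q$-independent, using finite-dimensionality of the deficiency subspaces of the symmetric operator underlying Eq.~\eqref{eqn:Dirac.quaternion.bc} to make all resolvent differences $(\slashed{D}(q)-i)^{-1}-(\slashed{D}(q_0)-i)^{-1}$ finite-rank, and then computing the common essential spectrum by comparison with the full-line Dirac operator where Fourier transform makes everything explicit.
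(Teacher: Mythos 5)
Your argument is correct, but it differs meaningfully from the paper's. The paper invokes the same deficiency-index principle you use, but at the level of $\slashed{D}$ itself: since the deficiency indices are finite, all self-adjoint extensions of the minimal Dirac operator share an essential spectrum, so one may pick the most convenient boundary condition $q=1$; a constant unitary conjugation then turns $\slashed{D}(1)$ into $\begin{pmatrix}1 & -\frac{d}{dz}\\ \frac{d}{dz}&-1\end{pmatrix}$ with a Dirichlet condition in one component, which diagonalises under the odd (sine) Fourier transform and gives $\sigma(\slashed{D}(1))=(-\infty,-1]\cup[1,\infty)$ directly. You instead split the two inclusions: for ``$\supseteq$'' you build Weyl sequences supported away from $z=0$, which is perfectly valid and has the advantage of not relying on any invariance-of-essential-spectrum machinery; for ``$\subseteq$'' you square the operator, note that $\slashed{D}(q)^2$ restricts to $(-\frac{d^2}{dz^2}+1)\cdot\mathbf{1}_{\CC^4}$ on $C_0^\infty(\RR_+;\CC^4)$ so it is \emph{some} self-adjoint extension of that scalar operator, and only then invoke the finite-deficiency/finite-rank-resolvent argument to pin down $\sigma_{\rm ess}(\slashed{D}(q)^2)=[1,\infty)$, finishing with the spectral mapping theorem. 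Note that the spectral mapping step genuinely only yields the inclusion $\sigma_{\rm ess}(\slashed{D}(q))\subseteq\{x:x^2\ge 1\}$ (it cannot decide which of $\pm\sqrt{\mu}$ lies in $\sigma_{\rm ess}(\slashed{D}(q))$), so your Weyl-sequence step is not redundant. Your worry about the domain of $\slashed{D}(q)^2$ is actually harmless: self-adjointness of $\slashed{D}(q)^2$ is automatic from functional calculus, and all you need is that $C_0^\infty(\RR_+;\CC^4)$ lies in its domain and that it acts there as $-\frac{d^2}{dz^2}+1$, both of which are immediate; you never need to identify the induced boundary conditions explicitly. One small caution on your proposed fallback: the half-line operator $\slashed{D}(q)$ and the full-line Dirac operator are not self-adjoint extensions of a common symmetric operator, so their resolvent difference is not finite-rank without an additional decoupling or direct-sum argument; the paper sidesteps this by staying on the half-line and using the sine transform for the Dirichlet component. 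Overall your route is a bit longer but more modular, and the squaring trick generalises cleanly to other Dirac-type operators.
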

\begin{proof}
Due to finite deficiency indices, the essential spectrum does not depend on the choice of self-adjoint extension \cite{RS2}, so it suffices to take $q=1$. In that case, we can unitarily conjugate $\slashed{D}(1)$ into
\begin{equation*}
\frac{1}{\sqrt{2}}\begin{pmatrix}1 & 1 \\ i & -i \end{pmatrix}\begin{pmatrix}-i\frac{d}{dz} & 1 \\ 1 & i\frac{d}{dz}\end{pmatrix}\frac{1}{\sqrt{2}}\begin{pmatrix}1 & -i \\ 1 & i\end{pmatrix}=\begin{pmatrix}1 & -\frac{d}{dz} \\ \frac{d}{dz} & -1\end{pmatrix},
\end{equation*}
with transformed boundary condition $\psi(0)=\frac{1}{\sqrt{2}}\begin{pmatrix}1 & 1 \\ i & -i\end{pmatrix}\begin{pmatrix}u \\ u\end{pmatrix}=\begin{pmatrix}\sqrt{2} u \\ 0\end{pmatrix}$. The latter is a Dirichlet condition in (only) the second component, and we may apply the odd Fourier transform there. The symbol of $\begin{pmatrix}1 & -\frac{d}{dz} \\ \frac{d}{dz} & -1\end{pmatrix}$ is
$\begin{pmatrix}1 & -ip \\ ip & -1\end{pmatrix},\;p\in\wh{\RR}$, with eigenvalues $\pm\sqrt{p^2+1}$. Then
\begin{align*}
\sigma(\slashed{D}(1))=\cup_{p\in\wh{\RR}} \{-\sqrt{p^2+1},\sqrt{p^2+1}\}&=(-\infty,-1]\cup[1,\infty)\\
&=\sigma_{\rm ess}(\slashed{D}(1))
=\sigma_{\rm ess}(\slashed{D}(q)).
\end{align*}
\end{proof}

It follows from Lemmas \ref{lem:continuity.of.family} and \ref{lem:essential.spec} that $\{\slashed{D}(q)\}_{q\in{\rm Sp}(1)}$ defines a class in $\pi_3(\mathcal{CF}^{\rm sa})$. An indication that this homotopy class might be non-trivial, is given by a computation of the discrete spectrum. For this, we recall (Appendix \ref{sec:appendix.quaternion}) that $q\in{\rm Sp}(1)$ can be written as a sum of real and imaginary parts, $q=q_r+i\vect{q}\cdot\vect{\sigma}\equiv q_r+\sum_{j=1}^3 q_j\sigma_j$, where $q_r^2+|\vect{q}|^2=1$ and $\sigma_j$ are the Pauli matrices,
\begin{equation}
\sigma_1=\begin{pmatrix} 0 & 1 \\ 1 & 0 \end{pmatrix},\quad \sigma_2=\begin{pmatrix} 0 & -i \\ i & 0 \end{pmatrix},\quad\sigma_3=\begin{pmatrix} 1 & 0 \\ 0 & -1 \end{pmatrix}.\label{eqn:Pauli.matrices}
\end{equation}
\begin{proposition}\label{prop:discrete.spectrum}
For $q=q_r+i\vect{q}\cdot\vect{\sigma}\in{\rm Sp}(1)$, the discrete spectrum of $\slashed{D}(q)$ is
\begin{equation*}
\sigma_{\rm d}(\slashed{D}(q))=\begin{cases}\{q_r\},\qquad |q_r|<1,\\ \emptyset,\qquad\quad\;\, |q_r|=1.\end{cases}
\end{equation*}
In the former case, the eigenspace of $\slashed{D}(q)$ is spanned by the eigenfunction $\begin{pmatrix} u^-\\ q u^-\end{pmatrix} e^{-|\vect{q}|z}$,
where $u^-\in\CC^2$ solves $(\vect{q}\cdot\vect{\sigma})u^-=-|\vect{q}|u^-$.
\end{proposition}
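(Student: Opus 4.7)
My plan is to look for $L^2$ eigenfunctions of $\slashed{D}(q)$ with eigenvalue $\lambda$ lying in the essential spectral gap, i.e.\ $|\lambda|<1$, by solving the first-order ODE system explicitly and then imposing the boundary condition from Eq.\ \eqref{eqn:Dirac.quaternion.bc}. Writing $\psi=\binom{\phi}{\chi}$ with $\phi,\chi:\RR_+\to\CC^2$, the eigenvalue equation $\slashed{D}\psi=\lambda\psi$ becomes the pair
\begin{equation*}
-i\phi'+\chi=\lambda\phi,\qquad \phi+i\chi'=\lambda\chi.
\end{equation*}
Solving the first for $\chi=\lambda\phi+i\phi'$ and substituting into the second yields the decoupled Helmholtz-type equation $\phi''=(1-\lambda^2)\phi$.

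Next, I would use the fact that $|\lambda|<1$ implies $\mu:=\sqrt{1-\lambda^2}>0$, so the only $L^2(\RR_+)$ solutions are $\phi(z)=u^-e^{-\mu z}$ for some vector $u^-\in\CC^2$. Back-substituting gives $\chi(z)=(\lambda-i\mu)u^-e^{-\mu z}$. The boundary condition $\psi(0)=\binom{u}{qu}$ then forces $u=u^-$ and the spectral condition $qu^-=(\lambda-i\mu)u^-$, so $\lambda-i\mu$ must be an eigenvalue of $q\in\mathrm{Sp}(1)$.

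The eigenvalues of $q=q_r+i\vect{q}\cdot\vect{\sigma}$ are $q_r\pm i|\vect{q}|$, coming respectively from the $\pm|\vect{q}|$ eigenspaces of $\vect{q}\cdot\vect{\sigma}$. Since $\mu>0$ we must have $\lambda-i\mu=q_r-i|\vect{q}|$, giving $\lambda=q_r$, $\mu=|\vect{q}|$ (consistent with $\lambda^2+\mu^2=1$), and $u^-$ the $-|\vect{q}|$-eigenvector of $\vect{q}\cdot\vect{\sigma}$. This produces exactly one $L^2$ eigenfunction of the stated form, proving the first case $|q_r|<1$.

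Finally, for the case $|q_r|=1$ one has $|\vect{q}|=0$, so $\mu=0$ and the putative solution $\phi=u^-$ is constant, hence not in $L^2(\RR_+)$; likewise the other branch $\phi\propto e^{+\mu z}$ is ruled out. Thus no discrete eigenvalue lies in $(-1,1)$, and combined with Lemma \ref{lem:essential.spec} this gives $\sigma_{\rm d}(\slashed{D}(q))=\emptyset$. The only subtlety I foresee is the bookkeeping at the endpoints $|q_r|=1$, where the candidate eigenfunction degenerates to a non-decaying exponential precisely at the essential spectrum threshold; one must verify no threshold bound state appears, but this is immediate from the $L^2$ criterion.
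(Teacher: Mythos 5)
Your argument is correct and reaches the same eigenvalue, eigenfunction, and conclusion as the paper. The difference is in the order of operations: the paper starts from the ansatz $\psi(z)=\binom{u}{qu}e^{-az}$, which simultaneously builds in the boundary condition and a single exponential decay rate, and reduces the eigenvalue equation to the pair of quaternionic algebraic conditions $(ia+q)u=\lambda u$ and $(\overline{q}-ia)u=\lambda u$, forcing $\lambda=q_r$ and $(a+\vect{q}\cdot\vect{\sigma})u=0$. You instead fully decouple the first-order system into the scalar second-order equation $\phi''=(1-\lambda^2)\phi$, determine all $L^2(\RR_+)$ solutions $\phi=u^-e^{-\mu z}$ with $\mu=\sqrt{1-\lambda^2}>0$ \emph{before} touching the boundary data, reconstruct $\chi=(\lambda-i\mu)u^-e^{-\mu z}$, and only then impose the boundary condition as the quaternionic eigenvalue problem $qu^-=(\lambda-i\mu)u^-$. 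The two algebraic constraints coincide once you set $a=\mu$: $(ia+q)u=\lambda u$ is exactly $qu=(\lambda-i\mu)u$, and the second of the paper's equations is what you recover by noting that the eigenvalues of $q$ are $q_r\pm i|\vect{q}|$, which pins down $\lambda=q_r$, $\mu=|\vect{q}|$. Your route is more elementary: it spells out the full $L^2$ solution space of the ODE and thereby makes it transparent that no polynomial-times-exponential (Jordan block) solutions can arise for $\mu>0$, a point the paper's ansatz silently assumes. The paper's route is slightly slicker because the quaternionic structure does the bookkeeping. Both handle the endpoint case $|q_r|=1$ correctly, by noting that $\mu=|\vect{q}|=0$ is incompatible with $\lambda\in(-1,1)$ (equivalently, the would-be eigenfunction ceases to decay).
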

\begin{proof}
By elliptic regularity, we are looking for smooth (strong) solutions to the eigenvalue problem,
\begin{equation}
\slashed{D}(q)\cdot\psi\equiv \begin{pmatrix}-i\frac{d}{dz} & 1 \\ 1 & i\frac{d}{dz}\end{pmatrix}\psi=\lambda\psi,\label{eqn:eigenvalue.equation0}
\end{equation}
for eigenvalues $\lambda\in(-1,1)$. So the eigenfunctions $\psi$ should be of the form $\psi=\psi_a : z\mapsto\begin{pmatrix}u \\ qu\end{pmatrix}e^{-az}, u\in\CC^2$, for some $a\in\CC$, whence Eq.\ \eqref{eqn:eigenvalue.equation0} reduces to the algebraic equation
\begin{equation*}
\begin{pmatrix}ia & 1 \\ 1  & -ia\end{pmatrix}\begin{pmatrix}u \\ qu\end{pmatrix}=\begin{pmatrix}(ia+q) u \\ q(\overline{q}-ia) u\end{pmatrix}=\lambda\begin{pmatrix}u \\ qu\end{pmatrix}.
\end{equation*}
Equivalently, $u\in\CC^2$ solves the coupled pair of eigenvalue equations
\begin{equation*}
(ia+q)u\equiv(q_r+i(a+\vect{q}\cdot\vect{\sigma}))=\lambda u,\qquad(\overline{q}-ia)u\equiv(q_r-i(a+\vect{q}\cdot\vect{\sigma}))u=\lambda u, 
\end{equation*}
whose solution requires $\lambda=q_r$, together with $(a+\vect{q}\cdot\vect{\sigma})u=0$.
The last condition is solved by taking $a=\mp |\vect{q}|$ and $u=u^\pm$ a positive/negative spinor of $\vect{q}\cdot\vect{\sigma}$, i.e.\ $(\vect{q}\cdot\vect{\sigma})u^\pm=\pm|\vect{q}|u^\pm$ (note that $ |q_r|=|\lambda|<1\Leftrightarrow\vect{q}\neq 0$). We require $a>0$ for normalisability of $\psi_a$, so we must take the lower signs throughout.
\end{proof}

\begin{remark}
Observe that every $\lambda$ in the essential spectral gap $(-1,1)$ is attained as an eigenvalue of $\slashed{D}(q)$ for some $q$. Thus the family $\{\slashed{D}(q)\}_{q\in{\rm Sp}(1)}$ has the \emph{gap-filling property}: the discrete spectra ``flows across the essential spectral gap'', see Fig.\ \ref{fig:Fermi.sphere}. However, this cannot be attributed to the usual spectral flow along any closed path (compare Fig.\ \ref{fig:basic.spectral.flow}), since the parameter space ${\rm Sp}(1)$ is simply connected. As such, we need to find some other homotopy invariant of the family $\{\slashed{D}(q)\}_{q\in{\rm Sp}(1)}$ to demonstrate that its gap-filling property is  ``topologically protected'', rather than spurious.
\end{remark}
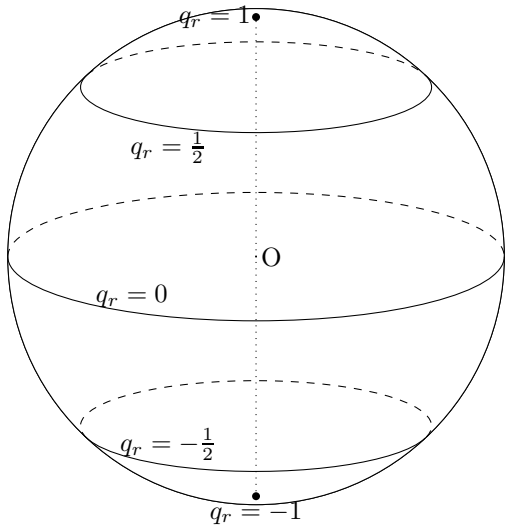
\begin{figure}[h]
\begin{center}

\begin{tikzpicture} 
\def\R{3.3 } 
\def\angEl{15} 
\def\angAz{-20} 
\filldraw[ball color=white] (0,0) circle (\R);
\filldraw[fill=white] (0,0) circle (\R);

\foreach \t in {-45,0,45} { \DrawLatitudeCircle[\R]{\t} }

\pgfmathsetmacro\H{\R*cos(\angEl)} 
\coordinate (0) at (0,0);
\node[circle,draw,black,scale=0.3] at (0,0) {};
\draw[right] node at (0,0){O};
\coordinate[mark coordinate] (N) at (0,\H);
\draw[left] node at (0,\H){$q_r=1$};
\coordinate[mark coordinate] (S) at (0,-\H);
\draw[right,below] node at (0,-\H){$q_r=-1$};
\draw[dotted, black](N)--(S);

\NewLatitudePlane[planeP]{\R}{\angEl}{45};
\path[planeP] (-120:\R) coordinate (P);
\draw[below] node at (P){$q_r=\frac{1}{2}$};

\NewLatitudePlane[equator]{\R}{\angEl}{00};
\path[equator] (-120:\R) coordinate (Pprime);
\draw[above] node at (Pprime){$q_r=0$};

\NewLatitudePlane[planePprimeprime]{\R}{\angEl}{-45};
\path[equator] (-120:\R) coordinate (Pprimeprime);
\draw[above] node at (Pprimeprime){$q_r=-\frac{1}{2}$};



\end{tikzpicture}

\end{center}
\caption{The sphere (one dimension suppressed) represents the unit quaternions ${\rm Sp}(1)$, with vertical axis labelling the real part $q_r$ of a quaternion $q$. For boundary condition $q\in {\rm Sp}(1)$ , the quaternionic half-line Dirac operator $\slashed{D}(q)$ has eigenvalue $\lambda$ inside the essential spectral gap $(-1,1)$, precisely when $q_r=\lambda$ (the 2-sphere at latitude $\lambda$). For the Weyl Hamiltonian on the 5D half-space, view ${\rm Sp}(1)\subset\wh{\RR}^4$ as the unit 3-sphere of boundary momenta $\vect{p}_\parallel=(p_1,p_2,p_3,p_4)$, with $p_1$ aligned vertically. For boundary condition $\varGamma=1$ and Fermi level $\mu$, the half-space Weyl Hamiltonian has Fermi surface the hyperplane $q_r\equiv p_1=\mu$.}\label{fig:Fermi.sphere}
\end{figure}

\begin{remark}
A similar eigenvalue computation to that of Prop.\ \ref{prop:discrete.spectrum} was carried out in \S III of \cite{HWK}, but our use of quaternions to simplify the analysis and obtain the eigenfunctions explicitly in terms of $q$, is new and especially useful for what follows.
\end{remark}

\section{Fermi gerbe construction}
When describing electronic properties of materials, one often encounters a family $F$ of self-adjoint Hamiltonians, parametrised by (some subset of) momentum space. When the family shares some gap (around a given real number called the \emph{Fermi  level}, which we set to 0) in the essential spectrum, we have an insulator ``in the bulk''. The example $\{\slashed{D}(q)\}_{q\in{\rm Sp}(1)}$ in \S\ref{sec:quaternionic.Dirac} comes from a model of Weyl semimetals, see Remark \ref{rem:Weyl.semimetal}.

We will construct a gerbe $\mathcal{G}_F$ using the \emph{discrete} spectral data of a family $F$ that has a common essential spectral gap. Physically, the discrete spectrum comprises boundary-localised ``edge states''. There has been tremendous interest in ``topological'' Hamiltonian families whose edge states interpolate between the bulk states in a ``topologically protected'' manner. 
Intuitively, $\mathcal{G}_F$ encodes this spectral interpolation, and it is trivialisable whenever the interpolation is ``breakable''. Thus, a family with non-trivialisable $\mathcal{G}_F$ will have ``topologically protected edge states''. 

The construction of $\mathcal{G}_F$ is an adaptation of that in \cite{CMFaddeev,CMM,PS}, which was motivated by anomalies in the quantisation of chiral fermions coupled to vector potentials. There, the parameter space of connections is contractible, so that an anomaly (non-trivial gerbe) arises only after passing to the moduli space (quotient by gauge transformations). We work in the more general setting of gap-continuous unbounded self-adjoint Fredholm families, which admits our minimal example of an anomalous Dirac operator family $\slashed{D}$, whose gerbe $\mathcal{G}_\slashed{D}$ is constructed directly over a 3-sphere. Here, $\slashed{D}$ is short for the family $\{\slashed{D}(q)\}_{q\in{\rm Sp}(1)}$ analysed in \S\ref{sec:quaternionic.Dirac}.

\vspace{1em}
 
  Let $X$ be a paracompact topological space and $F:X\rightarrow \mathcal{CF}^{\rm sa}$ be a gap-continuous family.  In addition we assume that there is an interval in the real numbers containing zero such that there is no essential spectrum in this interval for all operators in the family. Following \cite{AS}, we may homotope to a situation where the common spectral gap for the family is the interval $(-1,1)$.
 
 For completeness, we sketch the argument.  We let $p$ be the map from the bounded operators on a separable Hilbert space $\mathcal H$ to the Calkin algebra (the quotient of the bounded operators by the compact operators).  Recalling that the gap topology is the topology of norm resolvent convergence, we see that the map
 $x\mapsto || p((i+F(x))^{-1})||=: \rho(x)$ is continuous.  Then, $\inf|\sigma_{\rm ess}(F(x))|=\frac{\sqrt{1-\rho(x)^2}}{\rho(x)}$ is smaller than $1$ when $\rho(x)>\frac{1}{\sqrt{2}}$, in which case we replace $F(x)$ by \mbox{$F(x) \rho(x)(1-\rho(x)^2)^{-1/2}$}. This does not change any of the topological considerations that follow.
 Thus, assume henceforth that $(-1,1)$ is a common essential spectral gap and so each $F(x)$ may have isolated, finite-multiplicity eigenvalues (i.e.\ discrete spectrum) inside $(-1,1)$, possibly accumulating only at $\pm 1$.

For a gap-continuous self-adjoint family, membership in the spectrum is a closed condition (Theorem VIII.23, \cite{RS1}). So with $F:X\rightarrow \mathcal{CF}^{\rm sa}$ as above, we may define for each $\lambda\in (-1,1)$, the open set \begin{equation*}
U_\lambda=\{x\in X\,:\,\lambda\not\in\sigma(F(x))\}=\{x\in X\,:\,\lambda\not\in\sigma_{\rm d}(F(x))\}.
\end{equation*}
Then, $\mathcal{U}=\{U_\lambda\}_{-1<\lambda<1}$ is the \emph{standard open cover} for $X$ determined by $F$. As recalled in Appendix \ref{appendix:bundle.gerbes}, a gerbe may be specified locally by assigning Hermitian ``transition line bundles'' $\mathcal{L}_{U_\lambda, U_\mu}\rightarrow U_\lambda\cap U_\mu$ to each ordered pair in the open cover. If $\lambda<\mu$, take $\mathcal{L}_{U_\lambda, U_\mu}\rightarrow U_\lambda\cap U_\mu$ to be the determinant line bundle (i.e.\ top exterior power) of the eigenspaces for eigenvalues lying within $(\lambda,\mu)$. Note that only discrete spectra away from accumulation points are involved, so there are only finitely many eigenvalues (counted with multiplicity) in the interval $(\lambda,\mu)$. If $\lambda>\mu$, we assign the dual of the above, i.e.\ $\mathcal{L}^*_{U_\lambda,U_\mu}$, to $U_\lambda\cap U_\mu$. For $\lambda=\mu$, we assign the trivial line bundle $\underline{\CC}$. Note that if $x\in U_\lambda\cap U_\mu$ is such that $F(x)$ has no eigenvalues in $(\lambda,\mu)$, then the complex line assigned to $x$ is the canonical copy of $\CC$. 

We sketch why the bundle $\mathcal{L}_{U_\lambda, U_\mu}$ is locally trivial. Fix $x_0$ in $X$ and let $\chi_{\lambda\mu}$ be the characteristic function of the interval $(\lambda,\mu)$.
Proposition 2.10(b) of \cite{BLP} establishes the existence of a neighbourhood of $x_0$ on which $\chi_{\lambda\mu}(F(x))$ is a continuous family of finite rank projections.
It follows that there is a neighbourhood of $x_0$ on which the projections from the range of  $\chi_{\lambda\mu}(F(x))$ to the range of  $\chi_{\lambda\mu}(F(x_0))$ form a continuous family of linear isomorphisms. These isomorphisms induce isomorphisms of determinant lines 
which in turn may be used to construct a local trivialisation.

On triple intersections $U_\lambda\cap U_\mu\cap U_\nu$, one verifies that the tensored line bundles $\mathcal{L}_{\lambda,\mu}\otimes\mathcal{L}_{\mu,\nu}\otimes\mathcal{L}_{\nu,\lambda}$ are canonically trivial. Equivalently, $\mathcal{L}_{\lambda,\mu}\otimes\mathcal{L}_{\mu,\nu}$ and $\mathcal{L}_{\lambda,\nu}$ are canonically isomorphic. For instance, suppose $\lambda<\mu<\nu$. Then, when restricted to those $x\in U_\lambda\cap U_\mu\cap U_\nu$ for which $F(x)$ has eigenvalues in $(\lambda,\mu)$, the three line bundle factors are $\mathcal{L}_{\lambda,\mu}$, $\underline{\CC}$, and $\mathcal{L}^*_{\lambda,\mu}$, and they tensor to $\underline{\CC}$; a similarly trivial tensor product occurs for the other spectral intervals $(-1,\lambda), (\mu,\nu)$ and $(\nu,1)$.
The above assignment of line bundles thereby defines a gerbe $\mathcal{G}_F$ associated with the self-adjoint Fredholm family $F$, which we call the \emph{Fermi gerbe}.

\vspace{1em}
Quite generally, any gerbe $\mathcal{G}$ over $X$ has a \emph{Dixmier--Douady} (DD) invariant in $H^3(X,\ZZ)$ \cite{Br,Murray}. In short, the line bundle assignment to double overlaps of a given open cover, gives a \v{C}ech 2-cocycle with coefficients in the sheaf of continuous ${\rm U}(1)$-valued functions on $X$. Its cohomology class in $H^2(X,\underline{{\rm U}(1)})\cong H^3(X,\ZZ)$ is the direct limit over refinements of open covers of $X$.

\vspace{1em}
The DD-invariant of $\mathcal{G}_F$ has physical significance as an obstruction to maintaining a common spectral gap throughout the family $F$:
\begin{lemma}\label{lem:gap.vanishing}
Let $F:X\rightarrow \mathcal{CF}^{\rm sa}$ be a gap-continuous family with a common essential spectral gap $(-1,1)$. Suppose there exists some $\lambda_0\in(-1,1)$ such that $\lambda_0\not\in\sigma(F(x)), \forall\, x\in X$. Then, the DD-invariant of the Fermi gerbe $\mathcal{G}_F$ vanishes.
\end{lemma}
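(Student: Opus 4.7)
The plan is to construct an explicit global trivialisation of the Fermi gerbe $\mathcal{G}_F$, using the hypothesised gap value $\lambda_0$ as a basepoint. Recall that vanishing of the DD-class is equivalent to exhibiting Hermitian line bundles $L_\lambda \to U_\lambda$ whose coboundary recovers each transition bundle canonically on double overlaps, i.e.\ $\mathcal{L}_{U_\lambda, U_\mu} \cong L_\mu \otimes L_\lambda^{*}$ on $U_\lambda \cap U_\mu$.

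First I would observe that the hypothesis $\lambda_0 \notin \sigma(F(x))$ for all $x \in X$ is precisely the statement $U_{\lambda_0} = X$. Hence for every $\lambda \in (-1,1)$, one has $U_\lambda \cap U_{\lambda_0} = U_\lambda$, and the transition bundle
\[
L_\lambda := \mathcal{L}_{U_{\lambda_0}, U_\lambda} \;\longrightarrow\; U_\lambda
\]
is globally defined on the whole of $U_\lambda$: at $x \in U_\lambda$, the fibre is the determinant of the finite-dimensional spectral subspace of $F(x)$ for eigenvalues lying strictly between $\lambda_0$ and $\lambda$, dualised if $\lambda < \lambda_0$. On any double overlap $U_\lambda \cap U_\mu \subset U_{\lambda_0} \cap U_\lambda \cap U_\mu$, the canonical triviality of $\mathcal{L}_{\lambda_0,\lambda} \otimes \mathcal{L}_{\lambda,\mu} \otimes \mathcal{L}_{\mu,\lambda_0}$ established in the construction of $\mathcal{G}_F$ yields the required canonical isomorphism
\[
\mathcal{L}_{U_\lambda, U_\mu} \;\cong\; \mathcal{L}_{\lambda_0,\lambda}^{*} \otimes \mathcal{L}_{\lambda_0,\mu} \;=\; L_\mu \otimes L_\lambda^{*}.
\]
This identifies the \v{C}ech 2-cocycle defining $\mathcal{G}_F$ with the coboundary of the 1-cochain $\{L_\lambda\}$ of line bundles, so its class in $H^2(X,\underline{{\rm U}(1)}) \cong H^3(X,\ZZ)$ vanishes.

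The only substantive check to perform is that the identification above is valid regardless of the ordering of $\lambda_0, \lambda, \mu$ on the real line, since the definition of $\mathcal{L}_{\lambda,\mu}$ bifurcates according to whether $\lambda < \mu$ or $\mu < \lambda$. I do not expect this to present serious difficulty: the triple-overlap identity in the construction was formulated precisely to handle arbitrary orderings via determinant-line identifications along disjoint spectral sub-intervals, so the verification reduces to a short case analysis of six orderings of the three reals. No further use of the gap-continuity or Fredholm properties enters beyond what already went into the construction of $\mathcal{G}_F$.
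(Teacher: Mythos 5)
Your proof is correct, but it proceeds differently from the paper's. The paper's argument is maximally terse: since $U_{\lambda_0}=X$, the single-element cover $\{U_{\lambda_0}\}$ refines the standard cover $\mathcal{U}$, and the \v{C}ech complex of a one-element cover has vanishing $H^2$, so the DD-class dies in the direct limit over refinements. You instead build an explicit trivialising $1$-cochain: from $U_{\lambda_0}=X$ you extract globally defined line bundles $L_\lambda := \mathcal{L}_{\lambda_0,\lambda}\to U_\lambda$ and use the triple-overlap isomorphism $\phi_{\lambda_0\lambda\mu}$ (already supplied by the gerbe structure, and handling all sign/ordering cases) to exhibit $\mathcal{L}_{\lambda\mu}\cong L_\mu\otimes L_\lambda^*$. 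This is the ``geometric'' route: it produces a trivialisation of $\mathcal{G}_F$ as a bundle gerbe, not merely vanishing of its cohomology class. The two conclusions are equivalent (the DD-class classifies gerbes up to stable isomorphism), so nothing is lost, and your version has the pedagogical advantage of making the trivialisation visible. The paper's version is shorter and defers entirely to the formal definition of $\check{H}^2$ as a colimit over covers. Both are valid; yours costs a little more bookkeeping (the ordering case-check, which you correctly identify as being absorbed by the triple-overlap axiom) and buys concreteness.
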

\begin{proof}
For such a family, the open set $U_{\lambda_0}$ already covers all of $X$. The cover comprising just $U_{\lambda_0}$ refines the standard one, and the \v{C}ech 2-cocycle is manifestly trivial on this cover. Passing to the direct limit over refinements, the DD-invariant likewise trivialises.
\end{proof}

For a general family $F:X\rightarrow \mathcal{CF}^{\rm sa}$, it is quite a challenge to compute the DD-invariant of its Fermi gerbe $\mathcal{G}_F$. Nevertheless, we can do so for $\mathcal{G}_\slashed{D}$.

\begin{theorem}\label{thm:DD.family}
For the quaternionic half-line Dirac operator family $\{\slashed{D}(q)\}_{q\in{\rm Sp}(1)}$, the associated Fermi gerbe $\mathcal{G}_{\slashed{D}}$ has Dixmier--Douady invariant a generator of $H^3({\rm Sp}(1),\ZZ)\cong \ZZ$.
\end{theorem}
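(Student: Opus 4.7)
The plan is to compute the Dixmier--Douady class via a Mayer--Vietoris argument on ${\rm Sp}(1) \cong S^3$. By Proposition \ref{prop:discrete.spectrum}, the standard open cover consists of $U_\lambda = \{q \in {\rm Sp}(1) : q_r \neq \lambda\}$, each the complement of the 2-sphere at latitude $\lambda$. I will refine to the coarser cover by the two contractible open 3-disks $V_+ = \{q : q_r > -\epsilon\}$ and $V_- = \{q : q_r < \epsilon\}$, for some small $\epsilon > 0$; their intersection $V_+ \cap V_- = \{|q_r| < \epsilon\}$ deformation retracts onto the equator $\{q_r = 0\} \cong S^2$. For any $q \in V_+$ the only possible eigenvalue in the bulk gap is $q_r > -\epsilon$, so $-\epsilon \notin \sigma(\slashed{D}(q))$ throughout $V_+$; by Lemma \ref{lem:gap.vanishing}, $\mathcal{G}_{\slashed{D}}|_{V_+}$ is trivialisable, and similarly on $V_-$. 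The ``difference'' of two such trivialisations over the overlap is a Hermitian line bundle $D \to V_+ \cap V_-$, and since $H^{\geq 1}(V_\pm) = 0$ the Mayer--Vietoris sequence provides an isomorphism $H^2(V_+ \cap V_-, \ZZ) \xrightarrow{\sim} H^3(S^3, \ZZ)$ sending the first Chern class $c_1(D)$ to the DD invariant of $\mathcal{G}_{\slashed{D}}$.

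To identify $D$, I would construct the two trivialisations explicitly. On $V_+$, assign to each $V_+ \cap U_\lambda$ the determinant line bundle $M_\lambda^+$ of those $\slashed{D}(q)$-eigenspaces with eigenvalues in $(\lambda, 1)$: by Proposition \ref{prop:discrete.spectrum}, this is the $q_r$-eigenline when $q_r > \lambda$ and the canonical trivial line when $q_r < \lambda$. A case analysis over the three components of $V_+ \cap U_\lambda \cap U_\mu$ (whether $q_r$ lies above $\mu$, strictly between $\lambda$ and $\mu$, or below $\lambda$) verifies the gerbe identity $\mathcal{L}_{\lambda\mu} \cong M_\lambda^+ \otimes (M_\mu^+)^*$. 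Symmetrically, on $V_-$ take $M_\lambda^-$ to be the dual of the determinant line of eigenspaces with eigenvalues in $(-1, \lambda)$. Then $D := M_\lambda^+ \otimes (M_\lambda^-)^*$ is canonically independent of $\lambda$ (using the transition line bundles $\mathcal{L}_{\lambda\mu}$ and the Hermitian metrics to identify each eigenline with its dual), and so patches into a globally defined line bundle on $V_+ \cap V_-$.

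By construction $D$ is locally the $q_r$-eigenline of $\slashed{D}(q)$. Restricting to the equator and invoking Proposition \ref{prop:discrete.spectrum}, its fibre at $\vect{q} \in S^2$ (with $|\vect{q}| = 1$) is the $-1$-eigenspace of $\vect{q} \cdot \vect{\sigma}$ --- this is the Hopf/monopole line bundle over $S^2$, whose first Chern class generates $H^2(S^2, \ZZ) \cong \ZZ$. The Mayer--Vietoris isomorphism then carries this generator to a generator of $H^3(S^3, \ZZ) \cong \ZZ$, which is the sought DD invariant.

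The main obstacle I anticipate is the bookkeeping in the second paragraph: verifying the gerbe identities $\mathcal{L}_{\lambda\mu} \cong M_\lambda^+ \otimes (M_\mu^+)^*$ and the analogous ones for $M_\lambda^-$ across each component of the triple overlaps, and then checking that the differences $M_\lambda^+ \otimes (M_\lambda^-)^*$ cohere into a single global line bundle despite the accumulation of eigenvalues at $\pm 1$. The Chern-class computation for the monopole bundle and the Mayer--Vietoris isomorphism itself are standard.
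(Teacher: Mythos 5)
Your proposal is correct and is essentially the paper's own proof: the paper also refines the standard cover to two ``hemispherical'' open sets (called $V$ and $W$, corresponding to your $V_\pm$), trivialises the gerbe on each, and identifies the transition line bundle over the equatorial band with the negative eigenbundle of $\vect{q}\cdot\vect{\sigma}$, i.e.\ the Hopf bundle, concluding by the clutching/Mayer--Vietoris isomorphism $H^2(S^2,\ZZ)\cong H^3(S^3,\ZZ)$. You spell out the trivialising data $M^\pm_\lambda$ more explicitly than the paper does, and your concern about eigenvalue accumulation near $\pm 1$ is moot here since Proposition~\ref{prop:discrete.spectrum} shows each $\slashed{D}(q)$ has at most one eigenvalue in the gap.
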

\begin{proof}
For any given $\lambda\in (-1,1)$, Prop.\ \ref{prop:discrete.spectrum} shows that $\slashed{D}(q)$ has eigenvalue $\lambda$ precisely when $q$ belongs to the 2-sphere 
\begin{equation*}
S^2_\lambda:=\{q=q_r+i\vect{q}\cdot\vect{\sigma}\in{\rm Sp}(1)\,:\,q_r=\lambda\},
\end{equation*}
which we may parametrise by $\vect{q}$.
Furthermore, the bundle of eigenvectors over $S^2_\lambda$ may be identified with the negative eigenbundle of $\{\vect{q}\cdot\vect{\sigma}\}_{\vect{q}\in S^2_\lambda}$, i.e.\ a Hopf line bundle (this is the Bloch sphere identification $S^2\cong \CC\PP^1$). It is convenient to pass to the open cover of ${\rm Sp}(1)$, comprising
\begin{align*}
V&=\{q\in {\rm Sp}(1)\,:\, \sigma_{\rm d}(\slashed{D}(q)))\subset (-\epsilon,1)\},\\
W&=\{q\in {\rm Sp}(1)\,:\, \sigma_{\rm d}(\slashed{D}(q))\subset (-1,\epsilon)\},
\end{align*}
where $\epsilon>0$ is some small number. Note that $V\cup W={\rm Sp}(1)$, with $V\subset U_{-\epsilon}$ and $W\subset U_{\epsilon}$, so $\{V,W\}$ is a refinement of the standard cover $\mathcal{U}$. The Fermi gerbe is trivialised over $V$, and over $W$, but globally, it is clutched over the equatorial band $V\cap W=\cup_{\lambda\in (-\epsilon,\epsilon)}S^2_{\lambda}\subset U_{-\epsilon}\cap U_{\epsilon}$, which is a slightly thickened version of $S^2_0$. By definition, the Fermi gerbe $\mathcal{G}_\slashed{D}$ assigns the Hopf line bundle to $V\cap W$, and the latter has Chern class generating $H^2(S^2,\ZZ)\cong\ZZ$. This clutching construction shows that the DD-invariant of $\mathcal{G}_\slashed{D}$ generates $H^3({\rm Sp}(1),\ZZ)\cong\ZZ$.

\end{proof}

\begin{corollary}\label{cor:homotopy.generator}
The quaternionic half-line Dirac operator family $\{\slashed{D}(q)\}_{q\in{\rm Sp}(1)}$ represents a generator of $\pi_3(\mathcal{CF}^{\rm sa})\cong\ZZ$.
\end{corollary}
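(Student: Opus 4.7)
The plan is to deduce this corollary from Theorem~\ref{thm:DD.family} by promoting the Dixmier--Douady invariant of the Fermi gerbe into a group homomorphism $\phi:\pi_3(\mathcal{CF}^{\rm sa})\to\ZZ$, under which Theorem~\ref{thm:DD.family} places $[\slashed{D}]$ on a generator. By the homotopy argument recalled at the start of Section~4, every homotopy class in $\pi_3(\mathcal{CF}^{\rm sa})$ has a representative with common essential spectral gap $(-1,1)$, so $\phi$ is to be defined on such representatives.

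The first step is to verify that the Fermi gerbe construction is \emph{natural} under pullbacks. Given a continuous map $f:X\to Y$ of paracompact spaces and a gap-continuous family $F:Y\to\mathcal{CF}^{\rm sa}$ with common essential gap $(-1,1)$, the composite $F\circ f$ inherits the same gap, its standard open cover satisfies $U_\lambda^{F\circ f}=f^{-1}(U_\lambda^F)$, and the determinant transition line bundles pull back canonically because spectral projections commute with pointwise evaluation of the family. Passing to direct limits over cover refinements yields $[\mathcal{G}_{F\circ f}]=f^{*}[\mathcal{G}_F]$ in $H^3(X,\ZZ)$. Applying the same principle to the inclusions $i_0,i_1:X\hookrightarrow X\times I$ and to any gap-continuous homotopy $H:X\times I\to\mathcal{CF}^{\rm sa}$ preserving the common gap, and using that $i_0^{*}=i_1^{*}$ on $H^{3}(X\times I,\ZZ)\cong H^{3}(X,\ZZ)$, gives homotopy invariance, so $[F]\mapsto[\mathcal{G}_F]$ descends to a well-defined map $\phi$ on $\pi_3(\mathcal{CF}^{\rm sa})$.

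Next I would show that $\phi$ is a group homomorphism. Given based families $F_1,F_2:(S^3,*)\to\mathcal{CF}^{\rm sa}$ with common gap $(-1,1)$ at the basepoint, their sum in $\pi_3$ is represented by the composite $S^3\xrightarrow{\nabla}S^3\vee S^3\xrightarrow{F_1\vee F_2}\mathcal{CF}^{\rm sa}$, where $\nabla$ is the pinch map. Since $\nabla^{*}:H^3(S^3\vee S^3,\ZZ)\cong\ZZ\oplus\ZZ\to H^3(S^3,\ZZ)$ is the addition map, naturality yields $\phi([F_1]+[F_2])=\phi([F_1])+\phi([F_2])$. Writing $[\slashed{D}]=k$ in $\pi_3(\mathcal{CF}^{\rm sa})\cong\ZZ$, Theorem~\ref{thm:DD.family} says $\phi([\slashed{D}])=k\,\phi(1)$ is a generator of $H^3(S^3,\ZZ)\cong\ZZ$, i.e.\ $\pm 1$. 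This forces $k=\pm 1$, so $[\slashed{D}]$ generates $\pi_3(\mathcal{CF}^{\rm sa})$.

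The main obstacle is a careful formulation of naturality: the standard cover $\mathcal{U}$ depends on the family and the transition line bundles are built from $F$-dependent spectral projections, so one must check that pulling back the local data genuinely reproduces the Fermi gerbe of the composite rather than merely a cohomologous object. This is transparent from the explicit definitions, and since the DD class is defined as a direct limit over refinements it is insensitive to the specific choice of cover, so naturality at the cohomological level needs no further refinement argument.
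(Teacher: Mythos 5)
Your proposal is correct, but it establishes the key homomorphism property by a genuinely different route than the paper does. The paper's one-line proof relies on the operator-theoretic structure of $\mathcal{CF}^{\rm sa}$: it notes that the group operation on $\pi_3(\mathcal{CF}^{\rm sa})$ can be realized by the direct sum of operators (after a Kuiper-theorem identification of Hilbert spaces), and then the Dixmier--Douady map is a homomorphism because the Fermi gerbe is multiplicative under direct sums (the eigenspace of $F_1(x)\oplus F_2(x)$ in $(\lambda,\mu)$ is the direct sum of the eigenspaces, so the determinant transition lines are the tensor products $\mathcal{L}^{F_1}_{\lambda,\mu}\otimes\mathcal{L}^{F_2}_{\lambda,\mu}$). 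Your argument instead establishes \emph{naturality} of the Fermi gerbe under pullback of families --- which follows cleanly from the observation that $U_\lambda^{F\circ f}=f^{-1}(U_\lambda^F)$ and that the spectral-projection-built transition lines pull back --- and then deduces both homotopy invariance and additivity, the latter via the pinch map $\nabla:S^3\to S^3\vee S^3$ and the standard fact that $\nabla^*$ is addition on $H^3$. The two implicit verifications (Fermi gerbe of a wedge restricts to the Fermi gerbes on each factor; $\nabla^*$ is addition) are both standard and you identify them correctly. Your route has the advantage of avoiding Kuiper's theorem and of surfacing the naturality of the construction, which is a reusable structural fact; the paper's route is quicker and stays entirely within the operator picture. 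The final numerical step is the same in both: a homomorphism $\ZZ\to\ZZ$ hitting a generator forces the preimage to be a generator.

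One small point worth flagging: you invoke the normalisation at the start of Section 4 to get a common gap $(-1,1)$ on an arbitrary representative $S^3\to\mathcal{CF}^{\rm sa}$ and on a homotopy $S^3\times I\to\mathcal{CF}^{\rm sa}$. This works, but uses compactness of $S^3$ (resp.\ $S^3\times I$) so that $\rho(x)=\|p((i+F(x))^{-1})\|$ has supremum strictly less than $1$, and one should also note that the rescaling factor $\max(1,\rho(x)/\sqrt{1-\rho(x)^2})$ is itself continuous and interpolable to $1$, so that normalisation does not change the homotopy class. This is implicit in the paper's discussion, and your phrase ``by the homotopy argument recalled at the start of Section~4'' appropriately points to it, so the gap is only one of emphasis, not of logic.
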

\begin{proof}
In $\pi_3(\mathcal{CF}^{\rm sa})$, the composition can be taken to be via direct sum of operators (and invoking Kuiper's theorem). Then, the operation of taking the DD-invariant of the Fermi gerbe of a family $F:S^3\rightarrow\mathcal{F}^{\rm sa}$ is a surjective homomorphism $\pi_3(\mathcal{CF}^{\rm sa})\rightarrow H^3(S^3,\ZZ)$ by Thm.\ \ref{thm:DD.family}, thus an isomorphism.
\end{proof}

\section{5D Weyl Hamiltonian and topological Fermi surface}\label{sec:physics}
In this section, we explain how the anomalous family $\{\slashed{D}(q)\}_{q\in{\rm Sp}(1)}$ arises from the 5D Weyl Hamiltonian (which is a Euclidean space Dirac operator).

Recall the spinor representation of ${\rm Spin}(5)\cong{\rm Sp}(2)$ on $\CC^4\cong\HH^2$. By definition \cite{LM}, an orthonormal basis $e_1,\ldots, e_5$ of Euclidean $\RR^5$ is ``quantised'' into generators of the Clifford algebra $Cl_{0,5}$, satisfying the relation $e_ie_j+e_je_i=2\delta_{ij}$. In the spinor representation, these quantised $e_i$ are represented as Hermitian $4\times 4$ Dirac matrices $\gamma_i$, with the chirality element $\gamma_1\gamma_2\gamma_3\gamma_4\gamma_5$ acting as a scalar $-1$. Concretely, in terms of Pauli matrices, Eq.\ \eqref{eqn:Pauli.matrices}, we could choose\footnote{With minor relabelling, our convention here matches Eq.\ 5.1 of \cite{ASSS}, and Eq.\ 4.2 of \cite{MT}.}
\begin{equation}
\gamma_1=\sigma_1\otimes 1_2,\quad \gamma_2=\sigma_2\otimes\sigma_3,\quad\gamma_3=-\sigma_2\otimes\sigma_2,\quad \gamma_4=\sigma_2\otimes \sigma_1,\quad \gamma_5=\sigma_3\otimes 1_2.\label{eqn:Dirac.matrices}
\end{equation}
That the spinor representation is quaternionic can be explicitly seen by choosing $\Theta=(1_2\otimes -i\sigma_2)\circ\kappa$ with $\kappa$ denoting complex conjugation, and checking that $\Theta$ commutes with each $\gamma_i$ and thus the entire Clifford algebra representation.

The 5D \emph{Weyl Hamiltonian} is $H^{\rm W}=-i\nabla\cdot\vect{\gamma}$. More precisely, $H^{\rm W}$ is self-adjoint on the Sobolev space $H^1(\RR^5)\otimes\CC^4$, and Fourier transforming to momentum space $\wh{\RR}^5$ gives
\begin{equation*}
H^{\rm W}=-i\nabla\cdot\vect{\gamma}\cong\int^\oplus_{\vect{p}\in\wh{\RR}^5} \vect{p}\cdot\vect{\gamma}\equiv\int^\oplus_{\vect{p}\in\wh{\RR}^5} \sum_{j=1}^5 p_j\gamma_j=:\int^\oplus_{\vect{p}\in\wh{\RR}^5} Q(\vect{p}),
\end{equation*}
Here, each $Q(\vect{p})=\vect{p}\cdot\vect{\gamma}$ is quaternionic linear on $\CC^4\cong\HH^2$, so we may think of it as an element of $M_2(\HH)$. Operators of the form $Q(\vect{p})$ were called \emph{quadrupole Hamiltonians} in \cite{ASSS}.

\begin{remark}
The quaternionic structure $\Theta$ acts fiberwise over momentum space $\wh{\RR}^5$, but when Fourier transformed to position space $\RR^5$, it is actually the composition of a time-reversal operator $\mathsf{T}=(1_2\otimes -i\sigma_2)\circ\kappa$ with spatial inversion $\mathsf{P}:x\mapsto -x\in\RR^5$. That is, $\Theta$ is really a $\mathsf{PT}$ symmetry for $H^{\rm W}=-i\nabla\cdot\vect{\gamma}$. 
\end{remark}

Write $z$ for the fifth coordinate of $\RR^5$. We wish to define a half-space ($z\geq 0$) version of $H^{\rm W}$. As the momentum $p_5$ along $z$ will no longer be conserved, it is convenient to think of $\wh{\RR}^5=\wh{\RR}^4\oplus\wh{\RR}$ as $\HH\oplus\wh{\RR}$, with the boundary-parallel momentum $\vect{p}_\parallel=(p_1,p_2,p_3,p_4)\in\wh{\RR}^4$ repackaged as a single quaternion\footnote{For the real/imaginary decomposition of the quaternion $q=q_r+i\vect{q}\cdot\vect{\sigma}$ (see Appendix \ref{sec:appendix.quaternion}) of Eq.\ \eqref{eqn:quaternion.package}, we have $q_r=p_1$, and $\vect{q}=(p_4,-p_3,p_2)$.}, 
\begin{equation}
\wh{\RR}^4\ni \vect{p}_\parallel\longleftrightarrow q=\begin{pmatrix}p_1+ip_2 & -p_3+ip_4 \\ p_3+ip_4 & p_1-ip_2\end{pmatrix} \in \HH.\label{eqn:quaternion.package}
\end{equation}
Then the map $\vect{p}\mapsto Q(\vect{p})$ is concisely written as
\begin{equation*}
(q, p_5)\equiv(\vect{p}_\parallel,p_5)=\vect{p}\mapsto Q(\vect{p})=\vect{p}\cdot\vect{\gamma}=\begin{pmatrix} p_5 & \overline{q} \\ q & -p_5\end{pmatrix}\in M_{2}(\HH).
\end{equation*}
Undoing the Fourier transformation in the last variable, $z\rightarrow p_5$, we have
\begin{equation*}
H^{\rm W}=\int^\oplus_{\vect{p}_\parallel\in\wh{\RR}^4}\begin{pmatrix}-i\frac{d}{dz} & \overline{q} \\ q & i\frac{d}{dz}\end{pmatrix}=:\int^\oplus_{q\in\HH} h^{\rm W}(q),
\end{equation*}
where recognise each partially Fourier transformed $h^{\rm W}(q)$ as a quaternionic-linear Dirac operator on the line $\RR$ with ``mass term'' $q$ (cf.\ Eq.\ \eqref{eqn:Dirac.massive}); each $h^{\rm W}(q)$ is self-adjoint on $H^1(\RR; \HH^2)$.

We can now construct half-space versions of $H^{\rm W}$, by restricting each of the formal 1D operators $h^{\rm W}(q)$ to $C_0^\infty(\RR_+; \HH^2)$, and picking a boundary condition $\varGamma\in{\rm Sp}(1)$ to make them self-adjoint operators $\wt{h}^{\rm W}(q;\varGamma)$. Then, the self-adjoint half-space Weyl Hamiltonian $\wt{H}^{\rm W}$ with boundary condition $\varGamma\in{\rm Sp}(1)$ is decomposed as
\begin{align*}
\wt{H}^{\rm W}(\varGamma)&=\int^\oplus_{q\in\HH} \begin{pmatrix}-i\frac{d}{dz} & \overline{q} \\ q & i\frac{d}{dz}\end{pmatrix}=:\int^\oplus_{q\in\HH}\wt{h}^{\rm W}(q;\varGamma),\\
{\rm Dom}(\wt{h}^{\rm W}(q;\varGamma))&=\left\{\psi\in H^1(\RR_+;\HH^2)\;:\; \psi(0)=\begin{pmatrix}u \\ \varGamma u\end{pmatrix},\;u\in\HH\cong\CC^2\right\}.
\end{align*}
For $q$ in the unit 3-sphere ${\rm Sp}(1)$ of momentum space $\wh{\RR}^4$, we have in fact already encountered $\wt{h}^{\rm W}(q;\varGamma)$ in a different guise --- it is just a unitarily conjugated version of $\slashed{D}(q)$ (see Eq.\ \eqref{eqn:conjugated.family}). Thus, we have unitarily equivalent families
\begin{equation}
\{\wt{h}^{\rm W}(q;\varGamma)\}_{q\in{\rm Sp}(1)}\longleftrightarrow \{\slashed{D}(\overline{q}\varGamma)\}_{q\in{\rm Sp}(1)}\label{eqn:transformed.family}
\end{equation}
related through conjugation by $\left\{\begin{pmatrix} 1 & 0 \\ 0 & \overline{q} \end{pmatrix}\right\}_{q\in{\rm Sp}(1)}$. Since $\{\slashed{D}(\overline{q}\varGamma)\}_{q\in{\rm Sp}(1)}$ is just a reparametrisation of $\{\slashed{D}(q)\}_{q\in{\rm Sp}(1)}$ by a degree $-1$ homeomorphism of ${\rm Sp}(1)$, it follows immediately that:

\begin{corollary}\label{cor:Weyl.DD.invariant}[to Thm.\ \ref{thm:DD.family}]
For the half-space Weyl Hamiltonian $\wt{H}^{\rm W}(\varGamma)$ with boundary condition $\varGamma\in{\rm Sp}(1)$, let $\{\wt{h}^{\rm W}(q;\varGamma)\}_{q\in{\rm Sp}(1)}$ be the Fourier transform, restricted to the unit 3-sphere ${\rm Sp}(1)\subset \HH\cong\wh{\RR}^4$. It is a gap-continuous family of self-adjoint Fredholm operators, which is homotopically non-trivial, and its Fermi gerbe has Dixmier--Douady invariant a generator of $H^3({\rm Sp}(1),\ZZ)$.
\end{corollary}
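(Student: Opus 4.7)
The plan is to leverage Theorem \ref{thm:DD.family} via the unitary equivalence \eqref{eqn:transformed.family} exhibited just before the corollary statement. First, I would verify by direct calculation that conjugating $\wt{h}^{\rm W}(q;\varGamma)$ by the norm-continuous unitary family sending $q$ to the block-diagonal unitary with diagonal entries $1$ and $\overline{q}$ (the same family appearing in the proof of Lem.\ \ref{lem:continuity.of.family}) removes the off-diagonal mass via $\overline{q}q=1$, while simultaneously transforming the boundary condition $\psi(0)=(u,\varGamma u)^T$ into $(u,\overline{q}\varGamma u)^T$. This is precisely the boundary condition of $\slashed{D}(\overline{q}\varGamma)$ per Eq.\ \eqref{eqn:Dirac.quaternion.bc}, thus exhibiting the families $\{\wt{h}^{\rm W}(q;\varGamma)\}_{q\in{\rm Sp}(1)}$ and $\{\slashed{D}(\overline{q}\varGamma)\}_{q\in{\rm Sp}(1)}$ as unitarily equivalent through a norm-continuous unitary path. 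Gap-continuity, Fredholmness, and a common essential spectral gap $(-1,1)$ thus carry over directly from Lemmas \ref{lem:continuity.of.family} and \ref{lem:essential.spec}.

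Next, I would analyse the reparametrisation $\phi_\varGamma:q\mapsto \overline{q}\varGamma$ as a self-homeomorphism of ${\rm Sp}(1)\cong S^3$, being the composition of quaternion conjugation with right-translation by $\varGamma$. Right-translation is an orientation-preserving diffeomorphism of the connected Lie group ${\rm Sp}(1)$ (isotopic to the identity via any path from $\varGamma$ to $1$), hence of degree $+1$. Quaternion conjugation, viewed as a map on $S^3\subset\RR^4$, fixes the real coordinate and negates the three imaginary coordinates, hence has degree $(-1)^3=-1$. So $\phi_\varGamma$ has degree $-1$, and $\phi_\varGamma^*$ acts as multiplication by $-1$ on $H^3({\rm Sp}(1),\ZZ)\cong\ZZ$, sending generators to generators.

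Since the Fermi gerbe construction depends only on the discrete spectral data and the associated eigenbundles, both of which are intertwined continuously under the unitary equivalence, the Fermi gerbe of $\{\wt{h}^{\rm W}(q;\varGamma)\}$ is canonically isomorphic to the reparametrised pullback $\phi_\varGamma^*\mathcal{G}_\slashed{D}$. Combined with Thm.\ \ref{thm:DD.family} and the degree calculation, its DD-invariant is $\pm$ a generator of $H^3({\rm Sp}(1),\ZZ)$, hence itself a generator. Homotopic non-triviality in $\pi_3(\mathcal{CF}^{\rm sa})$ then follows from Cor.\ \ref{cor:homotopy.generator} applied to the reparametrised family, or equivalently by contrapositive via Lem.\ \ref{lem:gap.vanishing}: a null-homotopy would trivialise the gerbe and kill its DD-class. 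I do not expect any serious obstacle, since every required ingredient is already in place; the only mildly non-routine step is the degree computation for $\phi_\varGamma$, which is a short consequence of the group structure of ${\rm Sp}(1)$. The main pitfall is to track signs carefully in the conjugation to confirm that the reparametrisation is indeed $q\mapsto\overline{q}\varGamma$, though any similar identification would still yield $|\deg|=1$ and preserve the conclusion.
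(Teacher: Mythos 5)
Your proof is correct and takes essentially the same route as the paper: conjugate by the norm-continuous unitary family $\bigl\{\begin{smallmatrix}1&0\\0&\overline{q}\end{smallmatrix}\bigr\}$ to pass to $\{\slashed{D}(\overline{q}\varGamma)\}_{q\in{\rm Sp}(1)}$, recognise this as the pullback of $\{\slashed{D}(q)\}$ along the degree $-1$ self-map $q\mapsto\overline{q}\varGamma$ of ${\rm Sp}(1)$, and invoke Thm.~\ref{thm:DD.family}. The paper states these facts without proof in the paragraph preceding the corollary; you have merely spelled out the conjugation calculation and the degree bookkeeping (conjugation contributes $(-1)^3$, right-translation is isotopic to the identity), which are exactly the implicit steps.
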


\subsection{Topological Fermi surface}
Each $\wt{h}^{\rm W}(q;\varGamma)$ in the Fourier decomposition of $\wt{H}^{\rm W}(\varGamma)$ has essential spectral gap $(-|q|,|q|)$. From Lemma \ref{lem:gap.vanishing}, we see that the non-vanishing DD-invariant of the Fermi gerbe for $\wt{H}^{\rm W}(\varGamma)$ (Corollary \ref{cor:Weyl.DD.invariant}) implies that the common essential spectral gap $(-1,1)$ of $\{\wt{h}^{\rm W}(q;\varGamma)\}_{q\in{\rm Sp}(1)}$ must get completely filled up by the discrete spectra of the $\wt{h}^{\rm W}(q;\varGamma)$. Physically, one says that the ``bulk spectral gap is filled up by edge-states''. Actually, the Fermi gerbe could be defined over $\wh{\RR}^4{\setminus}\{0\}$, which retracts to the unit sphere ${\rm Sp}(1)$. So over \emph{any} 3-sphere in $\wh{\RR}^4$ which encloses the origin, the same essential spectral gap-filling must occur.

Let $\mu\in(-1,1)$ be a \emph{Fermi level}. We are interested in the subset 
\begin{equation*}
S_\mu:=\{q\in\HH\cong\wh{\RR}^4\,:\, \mu\in\sigma_{\rm d}(\wt{h}^{\rm W}(q;\varGamma))\},
\end{equation*}
called the \emph{Fermi surface} of edge states, at Fermi level $\mu$, for the half-space Weyl Hamiltonian $\wt{H}^{\rm W}(\varGamma)$. The qualitative structure of $S_\mu$ may be determined as follows. Whenever $|q|\leq|\mu|$, the operator $\wt{h}^{\rm W}(q;\varGamma)$ has $\mu$ in its essential spectrum. But at any momentum radius $\rho>|\mu|$, the non-trivial Fermi gerbe for $\{\wt{h}^{\rm W}(q;\varGamma)\}_{|q|=\rho}$ forces discrete spectra to fill up the essential spectral gap $(-\rho,\rho)$, and the value $\mu$ is certainly attained somewhere at this radius. So each radius $\rho>|\mu|$ sphere contributes a non-empty set to the Fermi surface $S_\mu$, and we deduce that $S_\mu$ connects the critical 3-sphere $\rho=|\mu|$ to infinity. In particular, for $\mu=0$, the Fermi surface $S_0$ connects the origin to infinity.

\vspace{1em}
Of course, we could have used Eq.\ \eqref{eqn:transformed.family} and the exact spectral computation, Prop.\ \ref{prop:discrete.spectrum}, to deduce directly that $S_\mu$ is explicitly given by ${\rm Re}(\overline{q}\varGamma)=\mu$ with $|q|>|\mu|$, which is a punctured hypersurface. However, the key point of discovering the non-trivial DD-invariant of the underlying Fermi gerbe is that we may exploit its homotopy invariance to deduce that similar qualitative features of the discrete spectra continue to hold, \emph{even for suitably perturbed families for which there is generally no hope of solving the spectral problems exactly}.

In more detail, consider a perturbed family $\{\wt{h}^{\rm W}(q;\varGamma)+V(q)\}_{q\in\HH}$, where
 $V(q)$ is some relatively compact perturbation of $\wt{h}^{\rm W}(q;\varGamma)$, thus preserving its essential spectrum. For instance, $V(q)=V$ could be an arbitrary continuous $4\times 4$ Hermitian matrix-valued potential function which vanishes as $z\rightarrow \infty$. The perturbed Fermi surface could be very complicated and difficult to compute exactly. Nevertheless, the Fermi gerbe is still well-defined for the perturbed family, and by turning off the perturbation to homotope the family to the unperturbed one, we see that its DD-invariant remains non-trivial. Lemma \ref{lem:gap.vanishing} still applies, and we conclude that the perturbed Fermi surface (at Fermi level $\mu=0$) still connects the origin to infinity.

\begin{remark}[Topology of Fermi surface is not preserved]
For the boundary condition $\Gamma={\rm diag}(i,i)\in{\rm U}(2)$ (which is not in ${\rm Sp}(1)$), it is possible to show that the Fermi surface $S_0$ for $\wt{H}^{\rm W}(\Gamma)$ is the positive $p_1$-axis (thus it looks like a ``traditional'' Fermi arc), whereas it was a punctured hypersurface when a boundary condition $\varGamma\in{\rm Sp}(1)$ was chosen. Thus, the topology of the Fermi surface as a bare topological space, in the more simplistic sense of homeomorphism, homotopy/homology classes etc., can depend sensitively on the choice of boundary conditions, Fermi level, and/or perturbations. We need a more sophisticated geometric object, such as the Fermi gerbe, to extract a topological invariant protecting the Fermi surface.
\end{remark}

\begin{remark}\label{rem:Weyl.semimetal}[$\mathsf{PT}$-symmetric Weyl semimetals \cite{MT}]
Since $Q(\vect{p})$ are the most general $4\times 4$ Hermitian matrices commuting with $\Theta$, four-band tight-binding Hamiltonians with $\mathsf{PT}$ symmetry are equivalently families $\TT^5\ni k\mapsto \vect{p}(k)\cdot\vect{\gamma}=Q(\vect{p}(k))$, where $\TT^5$ is the Brillouin torus in $5$ dimensions, and $k\mapsto \vect{p}(k)$ is some 5-component vector field over $\TT^5$. The spectrum of $Q(\vect{p}(k))$ is simply $\pm\sqrt{|\vect{p}(k)|}$, with each eigenvalue twofold degenerate due to the quaternionic structure (Kramers' pairing). The eigenvalues cross at the zeroes of the vector field $\vect{p}$. If the local index of $\vect{p}$ at such a zero $k^*$ is non-vanishing (without loss, set it to $+1$), the four-band crossing is topologically protected, and there must be a second zero $k^\star$ somewhere else, with the opposite index, due to the Poincar\'{e}--Hopf theorem. In that case, $k^*$ is called a (generalised) \emph{Weyl point} for a $\mathsf{PT}$-symmetric \emph{Weyl semimetal} Hamiltonian $k\mapsto Q(\vect{p}(k))$. The linear expansion of $Q(\vect{p}(k))$ near $k^*$ is, up to a change of coordinates, just $\vect{p}\cdot\vect{\gamma}$. In this sense, the Weyl Hamiltonian (whose Fourier transform is $\vect{p}\cdot\vect{\gamma}$) is often taken to be the continuum differential operator model for a tight-binding Weyl semimetal near a Weyl point. When the tight-binding model Hamiltonian is truncated to the half-lattice $\ZZ^4\times\NN$, there is likewise a projection $\pi:\TT^5\rightarrow\TT^4$ of Brillouin tori. It can be proved, using a generalised Toeplitz index theorem, that the projected Weyl points $\pi(k^*), \pi(k^\star)$ are connected in $\TT^4$ by a collection of edge-localised states \cite{MT,GCT}. Near $k^*$ (and similarly, near $k^\star$), the Fermi surface of $\wt{H}^{\rm W}(\varGamma)$ models the edge states of the half-space tight-binding Hamiltonian, as far as their topological protection is concerned. Here, we should note that the boundary condition $\varGamma$ in the continuum model is not easily translated to the effective tight-binding model, so it is of importance that the Fermi surface in the continuum model is topologically protected, regardless of $\varGamma$.
\end{remark}

\section*{Acknowledgments}
G.C.T. acknowledges support from Australian Research Council grant DP200100729, the University of Adelaide for hosting him as a visitor, and helpful discussions with M.\ Ludewig. A.L.C. thanks U. C. Emir. On behalf of all authors, the corresponding author states that there is no conflict of interest.

\appendix
\section{Quaternion conventions}\label{sec:appendix.quaternion}

The quaternion algebra $\HH$ is generated by three anticommuting square roots of $-1$, labelled $I, J, K$. It can be represented with $2\times 2$ complex matrices, e.g.
\begin{equation}
\HH\ni q=\underbrace{q_r}_{{\rm Re}(q)}+\underbrace{bI+cJ+dK}_{i\cdot{\rm Im}(q)} \longleftrightarrow \begin{pmatrix} q_r+ib & -c-id \\ c-id & q_r-ib \end{pmatrix}.\label{eqn:quaternion.complex.matrix}
\end{equation}
In terms of Pauli matrices, Eq.\ \eqref{eqn:Pauli.matrices}, the imaginary part of $q$ is ${\rm Im}(q)=\vect{q}\cdot\vect{\sigma}\equiv \sum_{j=1}^3 q_j\sigma_j$, where $\vect{q}=(-d,-c,b)$.

Quaternion conjugation $q\mapsto\overline{q}=q_r-i\vect{q}\cdot\vect{\sigma}$ corresponds to the Hermitian adjoint in this representation. The norm is given by $|q|^2=q\overline{q}=\overline{q}q=q_r^2+|\vect{q}|^2$, and the 3-sphere of unit quaternions is then identified with ${\rm SU}(2)$. On $\CC^2$, there is a standard quaternionic structure given by the operator $\Theta=-i\sigma_2\circ\kappa$ (where $\kappa$ denotes complex conjugation), which is antiunitary, squares to $-1$, and commutes with the left multiplication by $q$ in Eq.\ \eqref{eqn:quaternion.complex.matrix}. In physics, $\Theta$ may be interpreted as a fermionic time-reversal operator. A quaternionic basis vector for $\CC^2$ is a vector $\mathsf{e}$ such that $\{\mathsf{e},\Theta\mathsf{e}\}$ is an orthonormal basis (over $\CC$) for $\CC^2$. 
It allows us to view $\CC^2$ as a quaternionic vector space on which $i,\Theta$ generate the (right) quaternionic scalar multiplication, and also to identify ${\rm SU}(2)\cong {\rm Sp}(1)$. Similarly, a quaternionic structure on $\CC^{2n}$ is an antiunitary squaring to $-1$, and a quaternionic basis $\{\mathsf{e}_j\}_{j=1}^n$ gives an identification $\CC^{2n}\cong\HH^n$.
One should not confuse the role of $\HH\subset M_2(\CC)$ as an algebra of operators, and as a vector space $\HH\cong\CC^2$.

\section{Bundle gerbes}\label{appendix:bundle.gerbes}
Bundle gerbes are, in a sense, a generalisation of line bundles. We recall the local description of bundle gerbes, and refer to \cite{Murray, MS, Hitchin} for detailed treatments. Let $\mathcal{U}=\{U_i\}_{i\in I}$ be an open cover of $X$. The data of a bundle gerbe over $X$ comprise, for each pair $U_i, U_j\in\mathcal{U}$, a Hermitian line bundle $\mathcal{L}_{ij}$ over $U_i\cap U_j$. It is assumed that $\mathcal{L}_{ii}$ is trivial for each $i\in I$, and that on each triple overlap $U_i\cap U_j\cap U_k$, there is a unitary isomorphism $\phi_{ijk}:\mathcal{L}_{ij}\otimes\mathcal{L}_{jk}\rightarrow\mathcal{L}_{ik}$ such that ``associativity'' holds on quadruple overlaps $U_i\cap U_j\cap U_k\cap U_l$,
\begin{equation*}
\begin{CD}
\mathcal{L}_{ij} \otimes \mathcal{L}_{jk} \otimes \mathcal{L}_{kl}
@>{1 \otimes \phi_{jkl}}>> 
\mathcal{L}_{ij} \otimes \mathcal{L}_{jl} \\
@V{\phi_{ijk} \otimes 1}VV @VV{\phi_{ijl}}V \\
\mathcal{L}_{ik} \otimes \mathcal{L}_{kl} 
@>{\phi_{ikl}}>>
\mathcal{L}_{il}.
\end{CD}
\end{equation*}
In particular, $\mathcal{L}_{ij}\cong\mathcal{L}_{ji}^*$. The isomorphisms $\phi_{ijk}$ may be specified by a \v{C}ech 2-cocycle, and define (after passing to refinements) a cohomology class in $H^3(X,\ZZ)$ called the Dixmier--Douady invariant of the bundle gerbe. It is the analogue for bundle gerbes of the Chern class of a line bundle.


\begin{thebibliography}{99}
\bibliographystyle{plain}
\bibitem{AMV} Armitage, N.P., Mele, E.J., Vishwanath, A.: Weyl and Dirac semimetals in three-dimensional solids. Rev. Mod. Phys. {\bf 90}(1) 015001 (2018)
\bibitem{AS}
Atiyah, M.F., Singer, I.M.: Index theory for skew-adjoint {F}redholm operators. Inst. Hautes \'{E}tudes Sci. Publ. Math. {\bf 37} 5--26 (1969)
\bibitem{ASSS}
Avron, J.E., Sadun, L., Segert, J., Simon, B.: Chern numbers, quaternions, and Berry's phases in Fermi systems. Commun. Math. Phys. {\bf 124} 595--627 (1989)
\bibitem{BLP} 
Booss-Bavnbek, B., Lesch, M., Phillips, J.: Unbounded Fredholm operators and spectral flow. Canad. J. Math. {\bf 57}(2) 225--250 (2005)
\bibitem{Br}
Brylinski, J.-L.: Loop Spaces, Characteristic Classes and Geometric Quantization. Birkh\"{a}user, Boston-Basel-Berlin (1993)
\bibitem{CJM}
Carey, A.L., Johnson, S., Murray, M.K.: Holonomy on D-branes.
J. Geom. Phys. {\bf 52}(2) 186--216 (2004)
\bibitem{CMUniversal}
Carey, A.L., Mickelsson, J.: The universal gerbe, Dixmier--Douady class, and gauge theory. Lett. Math. Phys. {\bf 59} 47--60 (2002)
\bibitem{CMM}
Carey, A.L., Murray, M.K., Mickelsson, J.: Index theory, gerbes, and Hamiltonian quantization. Commun. Math. Phys. {\bf 183} 707--722 (1997)
\bibitem{CMFaddeev}
Carey, A.L., Murray, M.K.: Faddeev's anomaly and bundle gerbes. Lett. Math. Phys. {\bf 37} 29--36 (1996)
\bibitem{CMM} 
Carey, A.L., Mickelsson, J., Murray, M.K.: Bundle gerbes applied to quantum field theory. Rev. Math. Phys. {\bf 12}(01) 65--90 (2000)
\bibitem{Gawedzki}
Gaw\c{e}dzki, K.: Square root of gerbe holonomy and invariants of time-reversal-symmetric topological insulators. J. Geom. Phys. {\bf 120} 169--191 (2007)
\bibitem{GomiTauber}
Gomi, K., Tauber, C.: Eigenvalue crossings in Floquet topological systems. Lett. Math. Phys. {\bf 110} 465--500 (2020)
\bibitem{GT-gerbe}
Gomi, K., Thiang, G.C.: `Real' gerbes and Dirac cones of topological insulators. \href{https://www.arxiv.org/abs/2103.05350}{arXiv:2103.05350}
\bibitem{HWK}
Hashimoto, K., Wu, X., Kimura, T.: Edge states at an intersection of edges of a topological material. Phys. Rev. B {\bf 95} 165443 (2017)
\bibitem{Hitchin}
Hitchin, N.: Lectures on special Lagrangian submanifolds. In: Vafa, C., Yau, S.-T. (eds.) Winter School on Mirror Symmetry, Vector Bundles and Lagrangian Submanifolds, vol.\ {\bf 23} of AMS/IP Stud. Adv. Math., pp.\ 151--182. Amer. Math. Soc., Providence, RI (2001)
\bibitem{Joachim} 
Joachim, M.: Unbounded Fredholm operators and $K$-theory. In: Farrell, F.T., L\"{u}ck, W. (eds.) High-dimensional manifold topology, pp.\ 177--199. World Sci. Publishing (2003)
\bibitem{LM}
Lawson, B., Michelsohn, M.: Spin Geometry, Princeton Univ. Press, 1989.
\bibitem{MT} Mathai, V., Thiang, G.C.: Differential topology of semimetals. Commun. Math. Phys. {\bf 355} 561--602 (2017)
\bibitem{Murray}
Murray, M.K.: Bundle gerbes. J. London Math. Soc. {\bf 2}(54) 403--416 (1996)
\bibitem{MS} Murray, M.K., Stevenson D.: Bundle gerbes: stable Isomorphism and local theory. J. London Math. Soc. {\bf 62}(3) 925--937 (2000)
\bibitem{Ozawa}
Ozawa, T., Price, H.M.: Topological quantum matter in synthetic dimensions. Nature Rev. Phys. {\bf 1} 349--357 (2019) 
\bibitem{Palumbo}
Palumbo, G., Goldman, N.: Revealing tensor monopoles through quantum-metric measurements. Phys. Rev. Lett. {\bf 121} 170401 (2018)
\bibitem{Phillips} Phillips, J.: Self-adjoint Fredholm operators and spectral flow. Canad. Math. Bull. {\bf 39}(4) 460--467 (1996)
\bibitem{PS}
Pressley, A., Segal, G.: Loop Groups. Clarendon Press, Oxford, 1986.
\bibitem{RS1} 
Reed, M., Simon, B.: Methods of Mathematical Physics, vol I, Acad. Press (1980).
\bibitem{RS2} 
Reed, M., Simon, B.: Methods of Mathematical Physics, vol II, Acad. Press (1975).
\bibitem{Tan}
Tan, X. et al.: Experimental observation of tensor monopoles with a superconducting qudit. Phys. Rev. Lett. {\bf 126} 017702 (2021)
\bibitem{GCT}
Thiang, G.C.: On spectral flow and Fermi arcs. Commun. Math. Phys. (2021). \href{https://doi.org/10.1007/s00220-021-04007-z}{https://doi.org/10.1007/s00220-021-04007-z}
\bibitem{Viennot}
Viennot, D.: Geometric phases in adiabatic Floquet theory, Abelian gerbes and Cheon's anholonomy. J. Phys. A Math. Theor. {\bf 42} 395302 (2009)
\end{thebibliography}
\end{document}